\documentclass[publicdomain]{eptcs}

\usepackage{makeidx}
\usepackage{amssymb}
\usepackage{amsfonts}
\usepackage{amsmath}
\usepackage{latexsym}
\sloppy

\typeout{TCILATEX Macros for Scientific Word 5.0 <13 Feb 2003>.}
\typeout{NOTICE:  This macro file is NOT proprietary and may be 
freely copied and distributed.}
\makeatletter

\ifx\pdfoutput\relax\let\pdfoutput=\undefined\fi
\newcount\msipdfoutput
\ifx\pdfoutput\undefined
\else
 \ifcase\pdfoutput
 \else 
    \msipdfoutput=1
    \ifx\paperwidth\undefined
    \else
      \ifdim\paperheight=0pt\relax
      \else
        \pdfpageheight\paperheight
      \fi
      \ifdim\paperwidth=0pt\relax
      \else
        \pdfpagewidth\paperwidth
      \fi
    \fi
  \fi  
\fi

%

%
\newcount\@hour\newcount\@minute\chardef\@x10\chardef\@xv60
\def\tcitime{
\def\@time{%
  \@minute\time\@hour\@minute\divide\@hour\@xv
  \ifnum\@hour<\@x 0\fi\the\@hour:%
  \multiply\@hour\@xv\advance\@minute-\@hour
  \ifnum\@minute<\@x 0\fi\the\@minute
  }}%


\def\x@hyperref#1#2#3{%
   \catcode`\~ = 12
   \catcode`\$ = 12
   \catcode`\_ = 12
   \catcode`\# = 12
   \catcode`\& = 12
   \catcode`\% = 12
   \y@hyperref{#1}{#2}{#3}%
}

\def\y@hyperref#1#2#3#4{%
   #2\ref{#4}#3
   \catcode`\~ = 13
   \catcode`\$ = 3
   \catcode`\_ = 8
   \catcode`\# = 6
   \catcode`\& = 4
   \catcode`\% = 14
}

\@ifundefined{hyperref}{\let\hyperref\x@hyperref}{}
\@ifundefined{msihyperref}{\let\msihyperref\x@hyperref}{}

\@ifundefined{qExtProgCall}{\def\qExtProgCall#1#2#3#4#5#6{\relax}}{}
%
%
%
%
\def\QCTOpt[#1]#2{%
  \def\QCTOptB{#1}
  \def\QCTOptA{#2}
}
\def\QCTNOpt#1{%
  \def\QCTOptA{#1}
  \let\QCTOptB\empty
}
\def\Qct{%
  \@ifnextchar[{%
    \QCTOpt}{\QCTNOpt}
}
\def\QCBOpt[#1]#2{%
  \def\QCBOptB{#1}%
  \def\QCBOptA{#2}%
}
\def\QCBNOpt#1{%
  \def\QCBOptA{#1}%
  \let\QCBOptB\empty
}
\def\Qcb{%
  \@ifnextchar[{%
    \QCBOpt}{\QCBNOpt}%
}
\def\PrepCapArgs{%
  \ifx\QCBOptA\empty
    \ifx\QCTOptA\empty
      {}%
    \else
      \ifx\QCTOptB\empty
        {\QCTOptA}%
      \else
        [\QCTOptB]{\QCTOptA}%
      \fi
    \fi
  \else
    \ifx\QCBOptA\empty
      {}%
    \else
      \ifx\QCBOptB\empty
        {\QCBOptA}%
      \else
        [\QCBOptB]{\QCBOptA}%
      \fi
    \fi
  \fi
}
\newcount\GRAPHICSTYPE
\GRAPHICSTYPE=\z@
\def\GRAPHICSPS#1{%
 \ifcase\GRAPHICSTYPE
   \special{ps: #1}%
 \or
   \special{language "PS", include "#1"}%
 \fi
}%
%
%
%

\def\graffile#1#2#3#4{%
    \bgroup
	   \@inlabelfalse
       \leavevmode
       \@ifundefined{bbl@deactivate}{\def~{\string~}}{\activesoff}%
        \raise -#4 \BOXTHEFRAME{%
           \hbox to #2{\raise #3\hbox to #2{\null #1\hfil}}}%
    \egroup
}%
%
\def\draftbox#1#2#3#4{%
 \leavevmode\raise -#4 \hbox{%
  \frame{\rlap{\protect\tiny #1}\hbox to #2%
   {\vrule height#3 width\z@ depth\z@\hfil}%
  }%
 }%
}%
\newcount\@msidraft
\@msidraft=\z@
\let\nographics=\@msidraft
\newif\ifwasdraft
\wasdraftfalse

\def\GRAPHIC#1#2#3#4#5{%
   \ifnum\@msidraft=\@ne\draftbox{#2}{#3}{#4}{#5}%
   \else\graffile{#1}{#3}{#4}{#5}%
   \fi
}
\def\addtoLaTeXparams#1{%
    \edef\LaTeXparams{\LaTeXparams #1}}%
%

\newif\ifBoxFrame \BoxFramefalse
\newif\ifOverFrame \OverFramefalse
\newif\ifUnderFrame \UnderFramefalse

\def\BOXTHEFRAME#1{%
   \hbox{%
      \ifBoxFrame
         \frame{#1}%
      \else
         {#1}%
      \fi
   }%
}

\def\doFRAMEparams#1{\BoxFramefalse\OverFramefalse\UnderFramefalse\readFRAMEparams#1\end}%
\def\readFRAMEparams#1{%
 \ifx#1\end%
  \let\next=\relax
  \else
  \ifx#1i\dispkind=\z@\fi
  \ifx#1d\dispkind=\@ne\fi
  \ifx#1f\dispkind=\tw@\fi
  \ifx#1t\addtoLaTeXparams{t}\fi
  \ifx#1b\addtoLaTeXparams{b}\fi
  \ifx#1p\addtoLaTeXparams{p}\fi
  \ifx#1h\addtoLaTeXparams{h}\fi
  \ifx#1X\BoxFrametrue\fi
  \ifx#1O\OverFrametrue\fi
  \ifx#1U\UnderFrametrue\fi
  \ifx#1w
    \ifnum\@msidraft=1\wasdrafttrue\else\wasdraftfalse\fi
    \@msidraft=\@ne
  \fi
  \let\next=\readFRAMEparams
  \fi
 \next
 }%
%

\def\IFRAME#1#2#3#4#5#6{%
      \bgroup
      \let\QCTOptA\empty
      \let\QCTOptB\empty
      \let\QCBOptA\empty
      \let\QCBOptB\empty
      #6%
      \parindent=0pt
      \leftskip=0pt
      \rightskip=0pt
      \setbox0=\hbox{\QCBOptA}%
      \@tempdima=#1\relax
      \ifOverFrame
          \typeout{This is not implemented yet}%
          \show\HELP
      \else
         \ifdim\wd0>\@tempdima
            \advance\@tempdima by \@tempdima
            \ifdim\wd0 >\@tempdima
               \setbox1 =\vbox{%
                  \unskip\hbox to \@tempdima{\hfill\GRAPHIC{#5}{#4}{#1}{#2}{#3}\hfill}%
                  \unskip\hbox to \@tempdima{\parbox[b]{\@tempdima}{\QCBOptA}}%
               }%
               \wd1=\@tempdima
            \else
               \textwidth=\wd0
               \setbox1 =\vbox{%
                 \noindent\hbox to \wd0{\hfill\GRAPHIC{#5}{#4}{#1}{#2}{#3}\hfill}\\%
                 \noindent\hbox{\QCBOptA}%
               }%
               \wd1=\wd0
            \fi
         \else
            \ifdim\wd0>0pt
              \hsize=\@tempdima
              \setbox1=\vbox{%
                \unskip\GRAPHIC{#5}{#4}{#1}{#2}{0pt}%
                \break
                \unskip\hbox to \@tempdima{\hfill \QCBOptA\hfill}%
              }%
              \wd1=\@tempdima
           \else
              \hsize=\@tempdima
              \setbox1=\vbox{%
                \unskip\GRAPHIC{#5}{#4}{#1}{#2}{0pt}%
              }%
              \wd1=\@tempdima
           \fi
         \fi
         \@tempdimb=\ht1
         \advance\@tempdimb by -#2
         \advance\@tempdimb by #3
         \leavevmode
         \raise -\@tempdimb \hbox{\box1}%
      \fi
      \egroup%
}%
%
\def\DFRAME#1#2#3#4#5{%
  \vspace\topsep
  \hfil\break
  \bgroup
     \leftskip\@flushglue
	 \rightskip\@flushglue
	 \parindent\z@
	 \parfillskip\z@skip
     \let\QCTOptA\empty
     \let\QCTOptB\empty
     \let\QCBOptA\empty
     \let\QCBOptB\empty
	 \vbox\bgroup
        \ifOverFrame 
           #5\QCTOptA\par
        \fi
        \GRAPHIC{#4}{#3}{#1}{#2}{\z@}%
        \ifUnderFrame 
           \break#5\QCBOptA
        \fi
	 \egroup
  \egroup
  \vspace\topsep
  \break
}%
%
\def\FFRAME#1#2#3#4#5#6#7{%
  \@ifundefined{floatstyle}
    {
     \begin{figure}[#1]%
    }
    {
	 \ifx#1h
      \begin{figure}[H]%
	 \else
      \begin{figure}[#1]%
	 \fi
	}
  \let\QCTOptA\empty
  \let\QCTOptB\empty
  \let\QCBOptA\empty
  \let\QCBOptB\empty
  \ifOverFrame
    #4
    \ifx\QCTOptA\empty
    \else
      \ifx\QCTOptB\empty
        \caption{\QCTOptA}%
      \else
        \caption[\QCTOptB]{\QCTOptA}%
      \fi
    \fi
    \ifUnderFrame\else
      \label{#5}%
    \fi
  \else
    \UnderFrametrue%
  \fi
  \begin{center}\GRAPHIC{#7}{#6}{#2}{#3}{\z@}\end{center}%
  \ifUnderFrame
    #4
    \ifx\QCBOptA\empty
      \caption{}%
    \else
      \ifx\QCBOptB\empty
        \caption{\QCBOptA}%
      \else
        \caption[\QCBOptB]{\QCBOptA}%
      \fi
    \fi
    \label{#5}%
  \fi
  \end{figure}%
 }%
%
%
%
%
%
\newcount\dispkind%

\def\makeactives{
  \catcode`\"=\active
  \catcode`\;=\active
  \catcode`\:=\active
  \catcode`\'=\active
  \catcode`\~=\active
}
\bgroup
   \makeactives
   \gdef\activesoff{%
      \def"{\string"}%
      \def;{\string;}%
      \def:{\string:}%
      \def'{\string'}%
      \def~{\string~}%
    }
\egroup

\def\FRAME#1#2#3#4#5#6#7#8{%
 \bgroup
 \ifnum\@msidraft=\@ne
   \wasdrafttrue
 \else
   \wasdraftfalse%
 \fi
 \def\LaTeXparams{}%
 \dispkind=\z@
 \def\LaTeXparams{}%
 \doFRAMEparams{#1}%
 \ifnum\dispkind=\z@\IFRAME{#2}{#3}{#4}{#7}{#8}{#5}\else
  \ifnum\dispkind=\@ne\DFRAME{#2}{#3}{#7}{#8}{#5}\else
   \ifnum\dispkind=\tw@
    \edef\@tempa{\noexpand\FFRAME{\LaTeXparams}}%
    \@tempa{#2}{#3}{#5}{#6}{#7}{#8}%
    \fi
   \fi
  \fi
  \ifwasdraft\@msidraft=1\else\@msidraft=0\fi{}%
  \egroup
 }%
%

\def\TEXUX#1{"texux"}

%
%
%
%
%
%
%
%
%
%

%
\long\def\QQQ#1#2{%
     \long\expandafter\def\csname#1\endcsname{#2}}%
\@ifundefined{QTP}{\def\QTP#1{}}{}
\@ifundefined{QEXCLUDE}{\def\QEXCLUDE#1{}}{}
\@ifundefined{Qlb}{}{}
\@ifundefined{Qlt}{}{}
\long\def\QQA#1#2{}%
\def\QTR#1#2{{\csname#1\endcsname {#2}}}%
\def\EXPAND#1[#2]#3{}%
\def\NOEXPAND#1[#2]#3{}%
\def\LaTeXparent#1{}%
\def\ChildStyles#1{}%
\def\ChildDefaults#1{}%
\def\QTagDef#1#2#3{}%

\@ifundefined{correctchoice}{}{}
\@ifundefined{HTML}{\def\HTML#1{\relax}}{}
\@ifundefined{TCIIcon}{\def\TCIIcon#1#2#3#4{\relax}}{}
\if@compatibility
  \typeout{Not defining UNICODE  U or CustomNote commands for LaTeX 2.09.}
\else
  \providecommand{\UNICODE}[2][]{\protect\rule{.1in}{.1in}}
  \providecommand{\U}[1]{\protect\rule{.1in}{.1in}}
  
\fi

\@ifundefined{lambdabar}{
      
   }{}

%
\@ifundefined{StyleEditBeginDoc}{}{}
%
\def\QQfnmark#1{\footnotemark}

%
%
\@ifundefined{TCIMAKEINDEX}{}{\makeindex}%
%
\@ifundefined{abstract}{%
 \def\abstract{%
  \if@twocolumn
   \section*{Abstract (Not appropriate in this style!)}%
   \else \small 
   \begin{center}{\bf Abstract\vspace{-.5em}\vspace{\z@}}\end{center}%
   \quotation 
   \fi
  }%
 }{%
 }%
\@ifundefined{endabstract}{\def\endabstract
  {\if@twocolumn\else\endquotation\fi}}{}%
\@ifundefined{maketitle}{\def\maketitle#1{}}{}%
\@ifundefined{affiliation}{\def\affiliation#1{}}{}%
\@ifundefined{proof}{}{}%
\@ifundefined{endproof}{}{}%
\@ifundefined{newfield}{\def\newfield#1#2{}}{}%
\@ifundefined{chapter}{\def\chapter#1{\par(Chapter head:)#1\par }%
 \newcount\c@chapter}{}%
\@ifundefined{part}{\def\part#1{\par(Part head:)#1\par }}{}%
\@ifundefined{section}{\def\section#1{\par(Section head:)#1\par }}{}%
\@ifundefined{subsection}{\def\subsection#1%
 {\par(Subsection head:)#1\par }}{}%
\@ifundefined{subsubsection}{\def\subsubsection#1%
 {\par(Subsubsection head:)#1\par }}{}%
\@ifundefined{paragraph}{\def\paragraph#1%
 {\par(Subsubsubsection head:)#1\par }}{}%
\@ifundefined{subparagraph}{\def\subparagraph#1%
 {\par(Subsubsubsubsection head:)#1\par }}{}%
\@ifundefined{therefore}{}{}%
\@ifundefined{backepsilon}{}{}%
\@ifundefined{yen}{}{}%
\@ifundefined{registered}{%
   \def\registered{\relax\ifmmode{}\r@gistered
                    \else$\m@th\r@gistered$\fi}%
 \def\r@gistered{^{\ooalign
  {\hfil\raise.07ex\hbox{$\scriptstyle\rm\text{R}$}\hfil\crcr
  \mathhexbox20D}}}}{}%
\@ifundefined{Eth}{}{}%
\@ifundefined{eth}{}{}%
\@ifundefined{Thorn}{}{}%
\@ifundefined{thorn}{}{}%
%
\@ifundefined{degree}{}{}%
%
\newdimen\theight
\@ifundefined{Column}{\def\Column{%
 \vadjust{\setbox\z@=\hbox{\scriptsize\quad\quad tcol}%
  \theight=\ht\z@\advance\theight by \dp\z@\advance\theight by \lineskip
  \kern -\theight \vbox to \theight{%
   \rightline{\rlap{\box\z@}}%
   \vss
   }%
  }%
 }}{}%
\@ifundefined{qed}{\def\qed{%
 \ifhmode\unskip\nobreak\fi\ifmmode\ifinner\else\hskip5\p@\fi\fi
 \hbox{\hskip5\p@\vrule width4\p@ height6\p@ depth1.5\p@\hskip\p@}%
 }}{}%
\@ifundefined{cents}{}{}%
\@ifundefined{tciLaplace}{}{}%
\@ifundefined{tciFourier}{}{}%
\@ifundefined{textcurrency}{}{}%
\@ifundefined{texteuro}{}{}%
\@ifundefined{euro}{}{}%
\@ifundefined{textfranc}{}{}%
\@ifundefined{textlira}{}{}%
\@ifundefined{textpeseta}{}{}%
\@ifundefined{miss}{\def\miss{\hbox{\vrule height2\p@ width 2\p@ depth\z@}}}{}%
\@ifundefined{vvert}{}{}
\@ifundefined{tcol}{\def\tcol#1{{\baselineskip=6\p@ \vcenter{#1}} \Column}}{}%
\@ifundefined{dB}{}{}
\@ifundefined{mB}{}{}
\@ifundefined{nB}{}{}
\@ifundefined{note}{}{}%
\def\newfmtname{LaTeX2e}
%
\ifx\fmtname\newfmtname
  \DeclareOldFontCommand{\rm}{\normalfont\rmfamily}{\mathrm}
  \DeclareOldFontCommand{\sf}{\normalfont\sffamily}{\mathsf}
  \DeclareOldFontCommand{\tt}{\normalfont\ttfamily}{\mathtt}
  \DeclareOldFontCommand{\bf}{\normalfont\bfseries}{\mathbf}
  \DeclareOldFontCommand{\it}{\normalfont\itshape}{\mathit}
  \DeclareOldFontCommand{\sl}{\normalfont\slshape}{\@nomath\sl}
  \DeclareOldFontCommand{\sc}{\normalfont\scshape}{\@nomath\sc}
\fi

%

\def\alpha{{\Greekmath 010B}}%
\def\beta{{\Greekmath 010C}}%
\def\gamma{{\Greekmath 010D}}%
\def\delta{{\Greekmath 010E}}%
\def\epsilon{{\Greekmath 010F}}%
\def\zeta{{\Greekmath 0110}}%
\def\eta{{\Greekmath 0111}}%
\def\theta{{\Greekmath 0112}}%
\def\iota{{\Greekmath 0113}}%
\def\kappa{{\Greekmath 0114}}%
\def\lambda{{\Greekmath 0115}}%
\def\mu{{\Greekmath 0116}}%
\def\nu{{\Greekmath 0117}}%
\def\xi{{\Greekmath 0118}}%
\def\pi{{\Greekmath 0119}}%
\def\rho{{\Greekmath 011A}}%
\def\sigma{{\Greekmath 011B}}%
\def\tau{{\Greekmath 011C}}%
\def\upsilon{{\Greekmath 011D}}%
\def\phi{{\Greekmath 011E}}%
\def\chi{{\Greekmath 011F}}%
\def\psi{{\Greekmath 0120}}%
\def\omega{{\Greekmath 0121}}%
\def\varepsilon{{\Greekmath 0122}}%
\def\vartheta{{\Greekmath 0123}}%
\def\varpi{{\Greekmath 0124}}%
\def\varrho{{\Greekmath 0125}}%
\def\varsigma{{\Greekmath 0126}}%
\def\varphi{{\Greekmath 0127}}%

\def\nabla{{\Greekmath 0272}}
\def\FindBoldGroup{%
   {\setbox0=\hbox{$\mathbf{x\global\edef\theboldgroup{\the\mathgroup}}$}}%
}

\def\Greekmath#1#2#3#4{%
    \if@compatibility
        \ifnum\mathgroup=\symbold
           \mathchoice{\mbox{\boldmath$\displaystyle\mathchar"#1#2#3#4$}}%
                      {\mbox{\boldmath$\textstyle\mathchar"#1#2#3#4$}}%
                      {\mbox{\boldmath$\scriptstyle\mathchar"#1#2#3#4$}}%
                      {\mbox{\boldmath$\scriptscriptstyle\mathchar"#1#2#3#4$}}%
        \else
           \mathchar"#1#2#3#4%
        \fi 
    \else 
        \FindBoldGroup
        \ifnum\mathgroup=\theboldgroup 
           \mathchoice{\mbox{\boldmath$\displaystyle\mathchar"#1#2#3#4$}}%
                      {\mbox{\boldmath$\textstyle\mathchar"#1#2#3#4$}}%
                      {\mbox{\boldmath$\scriptstyle\mathchar"#1#2#3#4$}}%
                      {\mbox{\boldmath$\scriptscriptstyle\mathchar"#1#2#3#4$}}%
        \else
           \mathchar"#1#2#3#4%
        \fi     	    
	  \fi}

\newif\ifGreekBold  \GreekBoldfalse
\let\SAVEPBF=\pbf
\def\pbf{\GreekBoldtrue\SAVEPBF}%

\@ifundefined{theorem}{\newtheorem{theorem}{Theorem}}{}
\@ifundefined{lemma}{\newtheorem{lemma}[theorem]{Lemma}}{}
\@ifundefined{corollary}{\newtheorem{corollary}[theorem]{Corollary}}{}
\@ifundefined{conjecture}{}{}
\@ifundefined{proposition}{}{}
\@ifundefined{axiom}{}{}
\@ifundefined{remark}{}{}
\@ifundefined{example}{\newtheorem{example}{Example}}{}
\@ifundefined{exercise}{}{}
\@ifundefined{definition}{}{}

\@ifundefined{mathletters}{%
  \newcounter{equationnumber}  
  \def\mathletters{%
     \addtocounter{equation}{1}
     \edef\@currentlabel{\theequation}%
     \setcounter{equationnumber}{\c@equation}
     \setcounter{equation}{0}%
     \edef\theequation{\@currentlabel\noexpand\alph{equation}}%
  }
  
}{}

\@ifundefined{BibTeX}{%
    \def\BibTeX{{\rm B\kern-.05em{\sc i\kern-.025em b}\kern-.08em
                 T\kern-.1667em\lower.7ex\hbox{E}\kern-.125emX}}}{}%
\@ifundefined{AmS}%
    {\def\AmS{{\protect\usefont{OMS}{cmsy}{m}{n}%
                A\kern-.1667em\lower.5ex\hbox{M}\kern-.125emS}}}{}%
\@ifundefined{AmSTeX}{}{}%
%

\def\@@eqncr{\let\@tempa\relax
    \ifcase\@eqcnt \def\@tempa{& & &}\or \def\@tempa{& &}%
      \else \def\@tempa{&}\fi
     \@tempa
     \if@eqnsw
        \iftag@
           \@taggnum
        \else
           \@eqnnum\stepcounter{equation}%
        \fi
     \fi
     \global\tag@false
     \global\@eqnswtrue
     \global\@eqcnt\z@\cr}

\def\TCItag{\@ifnextchar*{\@TCItagstar}{\@TCItag}}
\def\@TCItag#1{%
    \global\tag@true
    \global\def\@taggnum{(#1)}%
    \global\def\@currentlabel{#1}}
\def\@TCItagstar*#1{%
    \global\tag@true
    \global\def\@taggnum{#1}%
    \global\def\@currentlabel{#1}}
%
%
%
%
%
%
%
%
%
%
%
%
%
%
%
%
%
%
%

\def\tint{\msi@int\textstyle\int}%
\def\tiint{\msi@int\textstyle\iint}%
\def\tiiint{\msi@int\textstyle\iiint}%
\def\tiiiint{\msi@int\textstyle\iiiint}%
\def\tidotsint{\msi@int\textstyle\idotsint}%
\def\toint{\msi@int\textstyle\oint}%

%
%
%
%
%
%
%
%
%
%
%
%
%
%
%

\newtoks\temptoksa
\newtoks\temptoksb
\newtoks\temptoksc

\def\msi@int#1#2{%
 \def\@temp{{#1#2\the\temptoksc_{\the\temptoksa}^{\the\temptoksb}}}%
 \futurelet\@nextcs
 \@int
}

\def\@int{%
   \ifx\@nextcs\limits
      \typeout{Found limits}%
      \temptoksc={\limits}%
	  \let\@next\@intgobble%
   \else\ifx\@nextcs\nolimits
      \typeout{Found nolimits}%
      \temptoksc={\nolimits}%
	  \let\@next\@intgobble%
   \else
      \typeout{Did not find limits or no limits}%
      \temptoksc={}%
      \let\@next\msi@limits%
   \fi\fi
   \@next   
}%

\def\@intgobble#1{%
   \typeout{arg is #1}%
   \msi@limits
}

\def\msi@limits{%
   \temptoksa={}%
   \temptoksb={}%
   \@ifnextchar_{\@limitsa}{\@limitsb}%
}

\def\@limitsa_#1{%
   \temptoksa={#1}%
   \@ifnextchar^{\@limitsc}{\@temp}%
}

\def\@limitsb{%
   \@ifnextchar^{\@limitsc}{\@temp}%
}

\def\@limitsc^#1{%
   \temptoksb={#1}%
   \@ifnextchar_{\@limitsd}{\@temp}%
}

\def\@limitsd_#1{%
   \temptoksa={#1}%
   \@temp
}

\def\dint{\msi@int\displaystyle\int}%
\def\diint{\msi@int\displaystyle\iint}%
\def\diiint{\msi@int\displaystyle\iiint}%
\def\diiiint{\msi@int\displaystyle\iiiint}%
\def\didotsint{\msi@int\displaystyle\idotsint}%
\def\doint{\msi@int\displaystyle\oint}%

\if@compatibility\else
  \RequirePackage{amsmath}
\fi

\def\ExitTCILatex{\makeatother }

\bgroup
\ifx\ds@amstex\relax
   \message{amstex already loaded}\aftergroup\ExitTCILatex
\else
   \@ifpackageloaded{amsmath}%
      {\if@compatibility\message{amsmath already loaded}\fi\aftergroup\ExitTCILatex}
      {}
   \@ifpackageloaded{amstex}%
      {\if@compatibility\message{amstex already loaded}\fi\aftergroup\ExitTCILatex}
      {}
   \@ifpackageloaded{amsgen}%
      {\if@compatibility\message{amsgen already loaded}\fi\aftergroup\ExitTCILatex}
      {}
\fi
\egroup


\typeout{TCILATEX defining AMS-like constructs in LaTeX 2.09 COMPATIBILITY MODE}
%
%
\let\DOTSI\relax
\def\RIfM@{\relax\ifmmode}%
\def\FN@{\futurelet\next}%
\newcount\intno@
\def\iint{\DOTSI\intno@\tw@\FN@\ints@}%
\def\iiint{\DOTSI\intno@\thr@@\FN@\ints@}%
\def\iiiint{\DOTSI\intno@4 \FN@\ints@}%
\def\idotsint{\DOTSI\intno@\z@\FN@\ints@}%
\def\ints@{\findlimits@\ints@@}%
\newif\iflimtoken@
\newif\iflimits@
\def\findlimits@{\limtoken@true\ifx\next\limits\limits@true
 \else\ifx\next\nolimits\limits@false\else
 \limtoken@false\ifx\ilimits@\nolimits\limits@false\else
 \ifinner\limits@false\else\limits@true\fi\fi\fi\fi}%
\def\multint@{\int\ifnum\intno@=\z@\intdots@                          
 \else\intkern@\fi                                                    
 \ifnum\intno@>\tw@\int\intkern@\fi                                   
 \ifnum\intno@>\thr@@\int\intkern@\fi                                 
 \int}
\def\multintlimits@{\intop\ifnum\intno@=\z@\intdots@\else\intkern@\fi
 \ifnum\intno@>\tw@\intop\intkern@\fi
 \ifnum\intno@>\thr@@\intop\intkern@\fi\intop}%
\def\intic@{%
    \mathchoice{\hskip.5em}{\hskip.4em}{\hskip.4em}{\hskip.4em}}%
\def\negintic@{\mathchoice
 {\hskip-.5em}{\hskip-.4em}{\hskip-.4em}{\hskip-.4em}}%
\def\ints@@{\iflimtoken@                                              
 \def\ints@@@{\iflimits@\negintic@
   \mathop{\intic@\multintlimits@}\limits                             
  \else\multint@\nolimits\fi                                          
  \eat@}
 \else                                                                
 \def\ints@@@{\iflimits@\negintic@
  \mathop{\intic@\multintlimits@}\limits\else
  \multint@\nolimits\fi}\fi\ints@@@}%
\def\intkern@{\mathchoice{\!\!\!}{\!\!}{\!\!}{\!\!}}%
\def\plaincdots@{\mathinner{\cdotp\cdotp\cdotp}}%
\def\intdots@{\mathchoice{\plaincdots@}%
 {{\cdotp}\mkern1.5mu{\cdotp}\mkern1.5mu{\cdotp}}%
 {{\cdotp}\mkern1mu{\cdotp}\mkern1mu{\cdotp}}%
 {{\cdotp}\mkern1mu{\cdotp}\mkern1mu{\cdotp}}}%
%
%
%
\def\RIfM@{\relax\protect\ifmmode}
\def\text{\RIfM@\expandafter\text@\else\expandafter\mbox\fi}
\let\nfss@text\text
\def\text@#1{\mathchoice
   {\textdef@\displaystyle\f@size{#1}}%
   {\textdef@\textstyle\tf@size{\firstchoice@false #1}}%
   {\textdef@\textstyle\sf@size{\firstchoice@false #1}}%
   {\textdef@\textstyle \ssf@size{\firstchoice@false #1}}%
   \glb@settings}

\def\textdef@#1#2#3{\hbox{{%
                    \everymath{#1}%
                    \let\f@size#2\selectfont
                    #3}}}
\newif\iffirstchoice@
\firstchoice@true
%
%
\def\Let@{\relax\iffalse{\fi\let\\=\cr\iffalse}\fi}%
\def\vspace@{\def\vspace##1{\crcr\noalign{\vskip##1\relax}}}%
\def\multilimits@{\bgroup\vspace@\Let@
 \baselineskip\fontdimen10 \scriptfont\tw@
 \advance\baselineskip\fontdimen12 \scriptfont\tw@
 \lineskip\thr@@\fontdimen8 \scriptfont\thr@@
 \lineskiplimit\lineskip
 \vbox\bgroup\ialign\bgroup\hfil$\m@th\scriptstyle{##}$\hfil\crcr}%
\def\Sb{_\multilimits@}%
\def\endSb{\crcr\egroup\egroup\egroup}%
\def\Sp{^\multilimits@}%

%
%
%
\newdimen\ex@
\ex@.2326ex
\def\rightarrowfill@#1{$#1\m@th\mathord-\mkern-6mu\cleaders
 \hbox{$#1\mkern-2mu\mathord-\mkern-2mu$}\hfill
 \mkern-6mu\mathord\rightarrow$}%
\def\leftarrowfill@#1{$#1\m@th\mathord\leftarrow\mkern-6mu\cleaders
 \hbox{$#1\mkern-2mu\mathord-\mkern-2mu$}\hfill\mkern-6mu\mathord-$}%
\def\leftrightarrowfill@#1{$#1\m@th\mathord\leftarrow
\mkern-6mu\cleaders
 \hbox{$#1\mkern-2mu\mathord-\mkern-2mu$}\hfill
 \mkern-6mu\mathord\rightarrow$}%
\def\overrightarrow{\mathpalette\overrightarrow@}%
\def\overrightarrow@#1#2{\vbox{\ialign{##\crcr\rightarrowfill@#1\crcr
 \noalign{\kern-\ex@\nointerlineskip}$\m@th\hfil#1#2\hfil$\crcr}}}%

\def\overleftarrow{\mathpalette\overleftarrow@}%
\def\overleftarrow@#1#2{\vbox{\ialign{##\crcr\leftarrowfill@#1\crcr
 \noalign{\kern-\ex@\nointerlineskip}$\m@th\hfil#1#2\hfil$\crcr}}}%
\def\overleftrightarrow{\mathpalette\overleftrightarrow@}%
\def\overleftrightarrow@#1#2{\vbox{\ialign{##\crcr
   \leftrightarrowfill@#1\crcr
 \noalign{\kern-\ex@\nointerlineskip}$\m@th\hfil#1#2\hfil$\crcr}}}%
\def\underrightarrow{\mathpalette\underrightarrow@}%
\def\underrightarrow@#1#2{\vtop{\ialign{##\crcr$\m@th\hfil#1#2\hfil
  $\crcr\noalign{\nointerlineskip}\rightarrowfill@#1\crcr}}}%

\def\underleftarrow{\mathpalette\underleftarrow@}%
\def\underleftarrow@#1#2{\vtop{\ialign{##\crcr$\m@th\hfil#1#2\hfil
  $\crcr\noalign{\nointerlineskip}\leftarrowfill@#1\crcr}}}%
\def\underleftrightarrow{\mathpalette\underleftrightarrow@}%
\def\underleftrightarrow@#1#2{\vtop{\ialign{##\crcr$\m@th
  \hfil#1#2\hfil$\crcr
 \noalign{\nointerlineskip}\leftrightarrowfill@#1\crcr}}}%

\def\qopnamewl@#1{\mathop{\operator@font#1}\nlimits@}
\let\nlimits@\displaylimits
\def\setboxz@h{\setbox\z@\hbox}

\def\varlim@#1#2{\mathop{\vtop{\ialign{##\crcr
 \hfil$#1\m@th\operator@font lim$\hfil\crcr
 \noalign{\nointerlineskip}#2#1\crcr
 \noalign{\nointerlineskip\kern-\ex@}\crcr}}}}

 \def\rightarrowfill@#1{\m@th\setboxz@h{$#1-$}\ht\z@\z@
  $#1\copy\z@\mkern-6mu\cleaders
  \hbox{$#1\mkern-2mu\box\z@\mkern-2mu$}\hfill
  \mkern-6mu\mathord\rightarrow$}
\def\leftarrowfill@#1{\m@th\setboxz@h{$#1-$}\ht\z@\z@
  $#1\mathord\leftarrow\mkern-6mu\cleaders
  \hbox{$#1\mkern-2mu\copy\z@\mkern-2mu$}\hfill
  \mkern-6mu\box\z@$}

\def\projlim{\qopnamewl@{proj\,lim}}
\def\injlim{\qopnamewl@{inj\,lim}}
\def\varinjlim{\mathpalette\varlim@\rightarrowfill@}
\def\varprojlim{\mathpalette\varlim@\leftarrowfill@}
\def\varliminf{\mathpalette\varliminf@{}}
\def\varliminf@#1{\mathop{\underline{\vrule\@depth.2\ex@\@width\z@
   \hbox{$#1\m@th\operator@font lim$}}}}
\def\varlimsup{\mathpalette\varlimsup@{}}
\def\varlimsup@#1{\mathop{\overline
  {\hbox{$#1\m@th\operator@font lim$}}}}

%
%
%
%
%
%
\begingroup \catcode `|=0 \catcode `[= 1
\catcode`]=2 \catcode `\{=12 \catcode `\}=12
\catcode`\\=12 
|gdef|@alignverbatim#1\end{align}[#1|end[align]]
|gdef|@salignverbatim#1\end{align*}[#1|end[align*]]

|gdef|@alignatverbatim#1\end{alignat}[#1|end[alignat]]
|gdef|@salignatverbatim#1\end{alignat*}[#1|end[alignat*]]

|gdef|@xalignatverbatim#1\end{xalignat}[#1|end[xalignat]]
|gdef|@sxalignatverbatim#1\end{xalignat*}[#1|end[xalignat*]]

|gdef|@gatherverbatim#1\end{gather}[#1|end[gather]]
|gdef|@sgatherverbatim#1\end{gather*}[#1|end[gather*]]

|gdef|@gatherverbatim#1\end{gather}[#1|end[gather]]
|gdef|@sgatherverbatim#1\end{gather*}[#1|end[gather*]]

|gdef|@multilineverbatim#1\end{multiline}[#1|end[multiline]]
|gdef|@smultilineverbatim#1\end{multiline*}[#1|end[multiline*]]

|gdef|@arraxverbatim#1\end{arrax}[#1|end[arrax]]
|gdef|@sarraxverbatim#1\end{arrax*}[#1|end[arrax*]]

|gdef|@tabulaxverbatim#1\end{tabulax}[#1|end[tabulax]]
|gdef|@stabulaxverbatim#1\end{tabulax*}[#1|end[tabulax*]]

|endgroup

\def\align{\@verbatim \frenchspacing\@vobeyspaces \@alignverbatim
You are using the "align" environment in a style in which it is not defined.}

\@namedef{align*}{\@verbatim\@salignverbatim
You are using the "align*" environment in a style in which it is not defined.}
\expandafter\let\csname endalign*\endcsname =\endtrivlist

\def\alignat{\@verbatim \frenchspacing\@vobeyspaces \@alignatverbatim
You are using the "alignat" environment in a style in which it is not defined.}

\@namedef{alignat*}{\@verbatim\@salignatverbatim
You are using the "alignat*" environment in a style in which it is not defined.}
\expandafter\let\csname endalignat*\endcsname =\endtrivlist

\def\xalignat{\@verbatim \frenchspacing\@vobeyspaces \@xalignatverbatim
You are using the "xalignat" environment in a style in which it is not defined.}

\@namedef{xalignat*}{\@verbatim\@sxalignatverbatim
You are using the "xalignat*" environment in a style in which it is not defined.}
\expandafter\let\csname endxalignat*\endcsname =\endtrivlist

\def\gather{\@verbatim \frenchspacing\@vobeyspaces \@gatherverbatim
You are using the "gather" environment in a style in which it is not defined.}

\@namedef{gather*}{\@verbatim\@sgatherverbatim
You are using the "gather*" environment in a style in which it is not defined.}
\expandafter\let\csname endgather*\endcsname =\endtrivlist

\def\multiline{\@verbatim \frenchspacing\@vobeyspaces \@multilineverbatim
You are using the "multiline" environment in a style in which it is not defined.}

\@namedef{multiline*}{\@verbatim\@smultilineverbatim
You are using the "multiline*" environment in a style in which it is not defined.}
\expandafter\let\csname endmultiline*\endcsname =\endtrivlist

\def\arrax{\@verbatim \frenchspacing\@vobeyspaces \@arraxverbatim
You are using a type of "array" construct that is only allowed in AmS-LaTeX.}

\def\tabulax{\@verbatim \frenchspacing\@vobeyspaces \@tabulaxverbatim
You are using a type of "tabular" construct that is only allowed in AmS-LaTeX.}

\@namedef{arrax*}{\@verbatim\@sarraxverbatim
You are using a type of "array*" construct that is only allowed in AmS-LaTeX.}
\expandafter\let\csname endarrax*\endcsname =\endtrivlist

\@namedef{tabulax*}{\@verbatim\@stabulaxverbatim
You are using a type of "tabular*" construct that is only allowed in AmS-LaTeX.}
\expandafter\let\csname endtabulax*\endcsname =\endtrivlist


 \def\endequation{%
     \ifmmode\ifinner 
      \iftag@
        \addtocounter{equation}{-1} 
        $\hfil
           \displaywidth\linewidth\@taggnum\egroup \endtrivlist
        \global\tag@false
        \global\@ignoretrue   
      \else
        $\hfil
           \displaywidth\linewidth\@eqnnum\egroup \endtrivlist
        \global\tag@false
        \global\@ignoretrue 
      \fi
     \else   
      \iftag@
        \addtocounter{equation}{-1} 
        \eqno \hbox{\@taggnum}
        \global\tag@false%
        $$\global\@ignoretrue
      \else
        \eqno \hbox{\@eqnnum}
        $$\global\@ignoretrue
      \fi
     \fi\fi
 } 

 \newif\iftag@ \tag@false
 
 \def\TCItag{\@ifnextchar*{\@TCItagstar}{\@TCItag}}
 \def\@TCItag#1{%
     \global\tag@true
     \global\def\@taggnum{(#1)}%
     \global\def\@currentlabel{#1}}
 \def\@TCItagstar*#1{%
     \global\tag@true
     \global\def\@taggnum{#1}%
     \global\def\@currentlabel{#1}}

  \@ifundefined{tag}{
     \def\tag{\@ifnextchar*{\@tagstar}{\@tag}}
     \def\@tag#1{%
         \global\tag@true
         \global\def\@taggnum{(#1)}}
     \def\@tagstar*#1{%
         \global\tag@true
         \global\def\@taggnum{#1}}
  }{}

%
%
%
%
%

\makeatother

\begin{document}

\title{One-dimensional Array Grammars and \\
P Systems with Array Insertion and Deletion Rules}
\author{Rudolf Freund 
\institute{
Technische Universit\"{a}t Wien, Institut f\"{u}r Computersprachen\\
Favoritenstr. 9, A-1040 Wien, Austria}
\email{rudi@emcc.at}
\and Sergiu Ivanov 
\institute{
Laboratoire d'Algorithmique, Complexit\'{e} et Logique,
Universit\'{e} Paris Est \\
Cr\'{e}teil Val de Marne, 61, Av. G\'{e}n. de Gaulle, 
94010 Cr\'{e}teil, France}
\email{sergiu.ivanov@u-pec.fr}
\and Marion Oswald
\institute{
Technische Universit\"{a}t Wien, Institut f\"{u}r Computersprachen\\
Favoritenstr. 9, A-1040 Wien, Austria}
\email{marion@emcc.at}
\and  K.G. Subramanian
\institute{
School of Computer Sciences, Universiti Sains Malaysia,
11800 Penang, Malaysia} 
\email{kgsmani1948@yahoo.com}
} 

\index{Freund, Rudolf} 
\index{Ivanov, Sergiu} 
\index{Oswald, Marion} 
\index{Subramanian, K.~G.}

\maketitle

\begin{abstract}
We consider the (one-dimensional) array counterpart of contextual as well as
insertion and deletion string grammars and consider the operations of array
insertion and deletion in array grammars. First we show that the emptiness
problem for P systems with (one-dimensional) insertion rules is undecidable.
Then we show computational completeness of P systems using (one-dimensional)
array insertion and deletion rules even of norm one only. The main result of
the paper exhibits computational completeness of one-\-di\-men\-sio\-nal
array grammars using array insertion and deletion rules of norm at most two.
\end{abstract}

\section{Introduction}

In the string case, the insertion operation was first considered in~\cite%
{Galiuk81,Haussler82,Haussler83} and after that related insertion and
deletion operations were investigated, e.g., in~\cite{Kari,KPTY97}. Based on
linguistic motivations, checking of insertion contexts was considered in~%
\cite{Marcus69} with introducing \textit{contextual} grammars; these
contextual grammars start from a set of strings (\textit{axioms}), and new
strings are obtained by using rules of the form $(s,c)$, where $s$ and $c$
are strings to be interpreted as inserting $c$ in the context of $s$, either
only at the ends of strings (\textit{external }case,\textit{\ }\cite%
{Marcus69}) or in the \textit{interior} of strings (\cite{Paun80}). The
fundamental difference between contextual grammars and Chomsky grammars is
that in contextual grammars we do not \textit{rewrite} symbols, but we only 
\textit{adjoin} symbols to the current string, i.e., contextual grammars are
pure grammars. Hence, among the variants of these grammars as, for example,
considered in \cite%
{Ehrenfeucht95,Ehrenfeucht96,Ehrenfeucht98,Paun94,Paun95,Paun97}, the
variant where we can retain only the set of strings produced by blocked
derivations, i.\thinspace e., derivations which cannot be continued, is of
special importance. This corresponds to the maximal mode of derivation
(called t-mode) in cooperating grammar systems (see \cite{Csuhaj94}) as well
as to the way results in P systems are obtained by halting computations; we
refer the reader to~\cite{pbook,phandbook} and to the web page~\cite{ppage}
for more details on P systems.

With the length of the contexts and/or of the inserted and deleted strings
being big enough, the insertion-deletion closure of a finite language leads
to computational completeness. There are numerous results establishing the
descriptional complexity parameters sufficient to achieve this goal; for an
overview of this area we refer to~\cite{VerlanH,Verlan2010}. In \cite%
{Freundetal2012} it was shown that computational completeness can also be
obtained with using only insertions and deletions of just one symbol at the
ends of a string using the regulating framework of P systems, where the
application of rules depends on the membrane region.

The contextual style of generating strings was extended to $d$%
-\-di\-men\-sio\-nal arrays in a natural way (see 
\cite{Freundetal2007,Kamala2000}): a
contextual array rule is a pair $\left( s,c\right) $ of two arrays to be
interpreted as inserting the new subarray $c$ in the context of the array $s$
provided that the positions where to put $c$ are not yet occupied by a
non-blank symbol. With retaining only the arrays produced in maximal
derivations, interesting languages of two-dimensional arrays can be
generated. In \cite{Fernauetal2005}, it was shown that every recursively
enumerable one-dimensional array language can be characterized as the
projection of an array language generated by a two-dimensional contextual
array grammar using rules of norm one only (the norm of a contextual array
rule $\left( s,c\right) $ is the maximal distance between two positions in the 
union of the two finite arrays $s$ and $c$). A contextual array rule $\left(
s,c\right) $ can be interpreted as array insertion rule; by inverting the
meaning of this operation, we get an array deletion rule $\left( s,c\right) $
deleting the subarray $c$ in the relative context of the subarray $s$. In 
\cite{Fernauetal2013}, contextual array rules in P systems are considered. P
systems using array insertion and deletion rules were investigated in \cite%
{Freundetal2013}, especially for the two-dimensional case, proving
computational completeness with using array insertion and deletion rules
even of norm one only.

In this paper, we focus on the one-dimensional case. First we show that the
emptiness problem for P systems using one-dimensional contextual array rules
is undecidable. We adapt the proof from \cite{Freundetal2013} for proving
the computational completeness of P systems using array insertion and
deletion rules even of norm one only. The main result of the paper exhibits
computational completeness of one-\-di\-men\-sio\-nal array grammars using
array insertion and deletion rules of norm at most two.

\section{Definitions and Examples}

The set of integers is denoted by $\mathbb{Z}$, the set of non-negative
integers by $\mathbb{N}$. An \textit{alphabet }$V$ is a finite non-empty set
of abstract \textit{symbols}. Given $V$, the free monoid generated by $V$
under the operation of concatenation is denoted by $V^{\ast }$; the elements
of $V^{\ast }$ are called strings, and the \textit{empty string} is denoted
by $\lambda $; $V^{\ast }\setminus \left\{ \lambda \right\} $ is denoted by $%
V^{+}$. Each string $w\in T^{+}$ can be written as $w\left( 1\right) \ldots
w\left( \left\vert w\right\vert \right) $, where $\left\vert w\right\vert $
denotes the length of $w$. The family of recursively enumerable string
languages is denoted by $RE$. For more details of formal language theory the
reader is referred to the monographs and handbooks in this area such as \cite%
{DassowPaun1989} and \cite{handbook}.

\subsection{A General Model for Sequential Grammars}

In order to be able to introduce the concept of membrane systems (P systems)
for various types of objects, we first define a general model (\cite%
{Freundetal2011}) of a grammar generating a set of terminal objects by
derivations where in each derivation step exactly one rule is applied
(sequential derivation mode) to exactly one object.

\smallskip

A \textit{(sequential) grammar} $G$ is a construct $\left(
O,O_{T},w,P,\Longrightarrow _{G}\right) $ where $O$ is a set of \textit{%
objects}, $O_{T}\subseteq O$ is a set of \textit{terminal objects}, $w\in O$
is the \textit{axiom (start object)}, $P$ is a finite set of \textit{rules},
and $\Longrightarrow _{G}\subseteq O\times O$ is the \textit{derivation
relation} of~$G$. We assume that each of the rules $p\in P$ induces a
relation $\Longrightarrow _{p}\subseteq O\times O$ with respect to $%
\Longrightarrow _{G}$ fulfilling at least the following conditions: (i) for
each object $x\in O$, $\left( x,y\right) \in \ \Longrightarrow _{p}$ for
only finitely many objects $y\in O$; (ii) there exists a finitely described
mechanism (as, for example, a Turing machine) which, given an object $x\in O$%
, computes all objects $y\in O$ such that $\left( x,y\right) \in \
\Longrightarrow _{p}$. A rule $p\in P$ is called \textit{applicable} to an
object $x\in O$ if and only if there exists at least one object $y\in O$
such that $\left( x,y\right) \in \ \Longrightarrow _{p}$; we also write $%
x\Longrightarrow _{p}y$. The derivation relation $\Longrightarrow _{G}$ is
the union of all $\Longrightarrow _{p}$, i.e., $\Longrightarrow _{G}=$ $\cup
_{p\in P}\Longrightarrow _{p}$. The reflexive and transitive closure of $%
\Longrightarrow _{G}$ is denoted by $\overset{\ast }{\Longrightarrow }_{G}$.

In the following we shall consider different types of grammars depending on
the components of $G$, especially on the rules in $P$; these may define a
special type $X$ of grammars which then will be called \textit{grammars of
type }$X$\textit{.}

Usually, the \textit{language generated by }$G$ (in the $\ast $-mode) is the
set of all terminal objects (we also assume $v\in O_{T}$ to be decidable for
every $v\in O$) derivable from the axiom, i.e., $L_{\ast }\left( G\right)
=\left\{ v\in O_{T}\mid w\overset{\ast }{\Longrightarrow }_{G}v\right\} $.
The \textit{language generated by }$G$ \textit{in the t-mode} is the set of
all terminal objects derivable from the axiom in a halting computation,
i.e., $L_{t}\left( G\right) =\left\{ v\in O_{T}\mid \left( w\overset{\ast }{%
\Longrightarrow }_{G}v\right) \wedge \nexists z\left( v\Longrightarrow
_{G}z\right) \right\} $. The family of languages generated by grammars of
type $X$ in the derivation mode $\delta $, $\delta \in \left\{ \ast
,t\right\} $, is denoted by $\mathcal{L}_{\mathcal{\delta }}\left( X\right) $%
. If for every $G$ of type $X$, $G=\left( O,O_{T},w,P,\Longrightarrow
_{G}\right) $, we have $O_{T}=O$, then $X$ is called a \textit{pure} type,
otherwise it is called \textit{extended}.

\subsection{String grammars}

In the general notion as defined above, a \textit{string grammar }$G_{S}$ is
represented as $\left( \left( N\cup T\right) ^{\ast },T^{\ast
},w,P,\Longrightarrow _{P}\right) $ where $N$ is the alphabet of \textit{%
non-terminal symbols}, $T$ is the alphabet of \textit{terminal symbols}, $%
N\cap T=\emptyset $, $w\in \left( N\cup T\right) ^{+}$ is the \textit{axiom}%
, $P$ is a finite set of \textit{string rewriting rules}, and the derivation
relation $\Longrightarrow _{G_{S}}$ is the classic one for string grammars
defined over $V^{\ast }\times V^{\ast }$, with $V=N\cup T$. As classic
types of string grammars we consider string grammars with arbitrary rules of
the form $u\rightarrow v$ with $u\in V^{+}$ and $v\in V^{\ast }$ as well as
context-free rules of the form $A\rightarrow v$ with $A\in N$ and $v\in
V^{\ast }$. The corresponding types of grammars are denoted by $ARB$ and $CF$%
, thus yielding the families of languages $\mathcal{L}\left( ARB\right) $
and $\mathcal{L}\left( CF\right) $, i.e., the family of recursively
enumerable languages $RE$ and the family of context-free languages,
respectively.

In \cite{Freundetal2012}, left and right insertions and deletions of strings
were considered; the corresponding types of grammars using rules inserting
strings of length at most $k$ and deleting strings of length at most $m$ are
denoted by $D^{m}I^{k}$.

\subsection{Array grammars\label{arraygrammars}}

We now introduce the basic notions for $d$-\-di\-men\-sio\-nal arrays and
array grammars in a similar way as in \cite{Freund93,Freundetal2007}. Let~$%
d\in \mathbb{N}$; then a $d$\textit{-\-di\-men\-sio\-nal array} $\mathcal{A}$
over an alphabet~$V$ is a function $\mathcal{A}:\mathbb{Z}^{d}\rightarrow
V\cup \left\{ \#\right\} $, where $shape\left( \mathcal{A}\right) =\left\{
v\in \mathbb{Z}^{d}\mid \mathcal{A}\left( v\right) \neq \#\right\} $ is
finite and $\#\notin V$ is called the \textit{background\/} or \textit{blank
symbol}. We usually write $\mathcal{A}=\left\{ \left( v,\mathcal{A}\left(
v\right) \right) \mid v\in shape\left( \mathcal{A}\right) \right\} $. The
set of all $d$-\-di\-men\-sio\-nal arrays over~$V$ is denoted by $V^{\ast d}$%
. The \textit{empty array\/} in~$V^{\ast d}$ with empty shape is denoted by~$%
\Lambda _{d}$. Moreover, we define $V^{+d}=V^{\ast d}\setminus \left\{
\Lambda _{d}\right\} $. Let $v\in \mathbb{Z}^{d}$, $v=\left( v_{1},\dots
,v_{d}\right) $; the norm of $v$ is defined as $\left\Vert v\right\Vert
=\max \left\{ |v_{i}|\mid 1\leq i\leq d\right\} $. The \textit{translation} $%
\tau _{v}:\mathbb{Z}^{d}\rightarrow \mathbb{Z}^{d}$ is defined by $\tau
_{v}\left( w\right) =w+v$ for all $w\in \mathbb{Z}^{d}$. For any array $%
\mathcal{A}\in V^{\ast d}$ we define $\tau _{v}\left( \mathcal{A}\right) $,
the corresponding $d$-\-di\-men\-sio\-nal array translated by $v$, by $%
\left( \tau _{v}\left( \mathcal{A}\right) \right) \left( w\right) =\mathcal{A%
}\left( w-v\right) $ for all $w\in \mathbb{Z}^{d}$. For a (non-empty) finite
set $W\subset Z^{d}$ the norm of~$W$ is defined as $\left\Vert W\right\Vert
=\max \left\{ \,\left\Vert v-w\right\Vert \mid v,w\in W\,\right\} $. The
vector $\left( 0,\dots ,0\right) \in \mathbb{Z}^{d}$ is denoted by $\Omega
_{d}$.

Usually\thinspace (e.g., see \cite{Cook78,Rosenfeld79,Wang80}) arrays are
regarded as equivalence classes of arrays with respect to linear
translations, i.\thinspace e., only the relative positions of the symbols
different from $\#$ in the plane are taken into account: the equivalence
class $\left[ \mathcal{A}\right] $ of an~array $\mathcal{A}\in V^{\ast d}$
is defined by 
\begin{equation*}
\left[ \mathcal{A}\right] =\left\{ \mathcal{B}\in V^{\ast d}\mid \mathcal{B}%
=\tau _{v}\left( \mathcal{A}\right) \text{ for some }v\in \mathbb{Z}%
^{d}\right\} .
\end{equation*}

\noindent The set of all equivalence classes of $d$-\-di\-men\-sio\-nal
arrays over $V$ with respect to linear translations is denoted by $\left[
V^{\ast d}\right] $ etc.

Let $d_{1},d_{2}\in \mathbb{N}$ with ${d_{1}}<d_{2}$. The natural embedding $%
i_{{d_{1}},d_{2}}:\mathbb{Z}^{{d_{1}}}\rightarrow \mathbb{Z}^{d_{2}}$ is
defined by $i_{{d_{1}},d_{2}}\left( v\right) =\left( v,\Omega _{d_{2}-{d_{1}}%
}\right) $ for all $v\in \mathbb{Z}^{{d_{1}}}$. To a ${d_{1}}$%
-\-di\-men\-sio\-nal array $\mathcal{A}\in V^{+{d_{1}}}$ with $\mathcal{A}%
=\left\{ \,\left( v,\mathcal{A}\left( v\right) \right) \mid v\in shape\left( 
\mathcal{A}\right) \,\right\} $, we assign the $d_{2}$-\-di\-men\-sio\-nal
array $i_{{d_{1}},d_{2}}\left( \mathcal{A}\right) =\left\{ \,\left( i_{{d_{1}%
},d_{2}}\left( v\right) ,\mathcal{A}\left( v\right) \right) \mid v\in
shape\left( \mathcal{A}\right) \,\right\} $; moreover, we have $i_{{d_{1}}%
,d_{2}}\left( \Lambda _{d_{1}}\right) =\Lambda _{d_{2}}$.

Any one-dimensional array $\mathcal{A}=\left\{ \,\left( v,\mathcal{A}\left(
v\right) \right) \mid v\in shape\left( \mathcal{A}\right) \,\right\} $ with $%
shape\left( \mathcal{A}\right) =\left\{ \left( m_{i}\right) \mid 1\leq i\leq
n\right\} $, $m_{1}<\dots <m_{n}$, can also be represented as the sequence $%
\left( m_{1}\right) \mathcal{A}\left( \left( m_{1}\right) \right) \dots
\left( m_{n}\right) \mathcal{A}\left( \left( m_{n}\right) \right) $ and
simply as a string $\mathcal{A}\left( \left( m_{1}\right) \right)
\#^{m_{2}-m_{1}-1}\dots \mathcal{A}\left( \left( m_{n}\right) \right) $; for
example, we may write $\left[ \left( -2\right) a\left( -1\right) a\left(
3\right) a\left( 5\right) a\right] $ and $a^{2}\#^{3}a\#a$ for the array $%
\left[ \left\{ \left( \left( -2\right) ,a\right) ,\left( \left( -1\right)
,a\right) ,\left( \left( 3\right) ,a\right) ,\left( \left( 5\right)
,a\right) \right\} \right] $.

A $\mathit{d}$\textit{-\-di\-men\-sio\-nal array grammar} $G_{A}$ is
represented as 
\begin{equation*}
\left( \left[ \left( N\cup T\right) ^{\ast d}\right] ,\left[ T^{\ast d}%
\right] ,\left[ \mathcal{A}_{0}\right] ,P,\Longrightarrow _{G_{A}}\right)
\end{equation*}%
where $N$ is the alphabet of \textit{non-terminal symbols}, $T$ is the
alphabet of \textit{terminal symbols}, $N\cap T=\emptyset $, $\mathcal{A}%
_{0}\in \left( N\cup T\right) ^{\ast d}$ is the \textit{start array}, $P$ is
a finite set of $d$\textit{-\-di\-men\-sio\-nal array rules} over $V$, $%
V=N\cup T$, and $\Longrightarrow _{G_{A}}\subseteq \left[ \left( N\cup
T\right) ^{\ast d}\right] \times \left[ \left( N\cup T\right) ^{\ast d}%
\right] $ is the derivation relation induced by the array rules in $P$.

A \textquotedblleft classical\textquotedblright\ $\mathit{d}$\textit{%
-\-di\-men\-sio\-nal array rule }$p$ over $V$ is a triple $\left( W,\mathcal{%
A}_{1},\mathcal{A}_{2}\right) $ where $W\subseteq \mathbb{Z}^{d}$ is a
finite set and $\mathcal{A}_{1}$ and $\mathcal{A}_{2}$ are mappings from $W$
to $V\cup \left\{ \#\right\} $. In the following, we will also write $%
\mathcal{A}_{1}\rightarrow \mathcal{A}_{2}$, as $W$ is implicitly given by
the finite arrays $\mathcal{A}_{1},\mathcal{A}_{2}$. The norm of the $d$%
-\-di\-men\-sio\-nal array production $\left( W,\mathcal{A}_{1},\mathcal{A}%
_{2}\right) $ is defined by $\left\Vert \left( W,\mathcal{A}_{1},\mathcal{A}%
_{2}\right) \right\Vert =\left\Vert W\right\Vert $. We say that the array $%
\mathcal{C}_{2}\in V^{\ast d}$ is \textit{directly derivable} from the array 
$\mathcal{C}_{1}\in V^{+d}$ by $\left( W,\mathcal{A}_{1},\mathcal{A}%
_{2}\right) $ if and only if there exists a vector $v\in \mathbb{Z}^{d}$
such that $\mathcal{C}_{1}\left( w\right) =\mathcal{C}_{2}\left( w\right) $
for all $w\in \mathbb{Z}^{d}-\tau _{v}\left( W\right) $ as well as $\mathcal{%
C}_{1}\left( w\right) =\mathcal{A}_{1}\left( \tau _{-v}\left( w\right)
\right) $ and $\mathcal{C}_{2}\left( w\right) =\mathcal{A}_{2}\left( \tau
_{-v}\left( w\right) \right) $ for all $w\in \tau _{v}\left( W\right) ,$
i.\thinspace e., the subarray of $\mathcal{C}_{1}$ corresponding to $%
\mathcal{A}_{1}$ is replaced by $\mathcal{A}_{2}$, thus yielding $\mathcal{C}%
_{2}$; we also write $\mathcal{C}_{1}\Longrightarrow _{p}\mathcal{C}_{2}.$
Moreover we say that the array $\mathcal{B}_{2}\in \left[ V^{\ast d}\right] $
is \textit{directly derivable} from the array $\mathcal{B}_{1}\in \left[
V^{+d}\right] $ by the $d$-\-di\-men\-sio\-nal array production $\left( W,%
\mathcal{A}_{1},\mathcal{A}_{2}\right) $ if and only if there exist $%
\mathcal{C}_{1}\in \mathcal{B}_{1}$ and $\mathcal{C}_{2}\in \mathcal{B}_{2}$
such that $\mathcal{C}_{1}\Longrightarrow _{p}\mathcal{C}_{2}$; we also
write $\mathcal{B}_{1}\Longrightarrow _{p}\mathcal{B}_{2}$.

A $d$-\-di\-men\-sio\-nal array rule $p=\left( W,\mathcal{A}_{1},\mathcal{A}%
_{2}\right) $ in $P$ is called \textit{mo\-no\-to\-nic} if $shape\left( 
\mathcal{A}_{1}\right) \subseteq shape\left( \mathcal{A}_{2}\right) $ and $%
\# $-\textit{context-free} if $shape\left( \mathcal{A}_{1}\right) =\left\{
\Omega _{d}\right\} $; if it is $\#$-\textit{context-free} and, moreover, $%
shape\left( \mathcal{A}_{2}\right) =W$, then $p$ is called \textit{%
context-free.} A $d$-\-di\-men\-sio\-nal array grammar is said to be of type 
$X$, $X\in \left\{ d\text{-}ARBA,d\text{-}MONA,d\text{-}\#\text{-}CFA,d\text{%
-}CFA\right\} $ if every array rule in $P$ is of the corresponding type, the
corresponding families of array languages of equivalence classes of $d$%
-\-di\-men\-sio\-nal arrays by $d$-\-di\-men\-sio\-nal array grammars are
denoted by $\mathcal{L}_{\mathcal{\ast }}\left( X\right) $. These families
form a Chomsky-like hierarchy, i.e., $\mathcal{L}_{\mathcal{\ast }}\left( d%
\text{-}CFA\right) \subsetneqq \mathcal{L}_{\mathcal{\ast }}\left( d\text{-}%
MONA\right) \subsetneqq \mathcal{L}_{\mathcal{\ast }}\left( d\text{-}%
ARBA\right) $ and $\mathcal{L}_{\mathcal{\ast }}\left( d\text{-}CFA\right)
\subsetneqq \mathcal{L}_{\mathcal{\ast }}\left( d\text{-}\#\text{-}%
CFA\right) \subsetneqq \mathcal{L}_{\mathcal{\ast }}\left( d\text{-}%
ARBA\right) $.

Two $d$-\-di\-men\-sio\-nal arrays $\mathcal{A}$ and $\mathcal{B}$ in $\left[
V^{\ast d}\right] $ are called \textit{shape-equivalent} if and only if $%
shape\left( \mathcal{A}\right) =shape\left( \mathcal{B}\right) $. Two $d$%
-\-di\-men\-sio\-nal array languages $L_{1}$ and $L_{2}$ from $\left[
V^{\ast d}\right] $ are called shape-equivalent if and only if $\left\{
shape\left( \mathcal{A}\right) \mid \mathcal{A}\in L_{1}\right\} =\left\{
shape\left( \mathcal{B}\right) \mid \mathcal{B}\in L_{2}\right\} $.

\subsection{Contextual, Insertion and Deletion Array Rules}

A $d$\textit{-di\-men\-sio\-nal contextual array rule} (see \cite%
{Freundetal2007}) over the alphabet $V$ is a pair of finite $d$%
-di\-men\-sio\-nal arrays $\left( \left( W_{1},\mathcal{A}_{1}\right)
,\left( W_{2},\mathcal{A}_{2}\right) \right) $ where $W_{1}\cap
W_{2}=\emptyset $ and $shape\left( \mathcal{A}_{1}\right) \cup shape\left( 
\mathcal{A}_{2}\right) \neq \emptyset $. The effect of this contextual rule
is the same as of the array rewriting rule $\left( W_{1}\cup W_{2},\mathcal{A%
}_{1},\mathcal{A}_{1}\cup \mathcal{A}_{2}\right) $, i.e., in the context of $%
\mathcal{A}_{1}$ we insert $\mathcal{A}_{2}$. Hence, such an array rule $%
\left( \left( W_{1},\mathcal{A}_{1}\right) ,\left( W_{2},\mathcal{A}%
_{2}\right) \right) $ can also be called an \textit{array insertion rule},
and then we write $I\left( \left( W_{1},\mathcal{A}_{1}\right) ,\left( W_{2},%
\mathcal{A}_{2}\right) \right) $; if $shape\left( \mathcal{A}_{i}\right)
=W_{i}$, $i\in \left\{ 1,2\right\} $, we simply write $I\left( \mathcal{A}%
_{1},\mathcal{A}_{2}\right) $. Yet we may also interpret the pair $\left(
\left( W_{1},\mathcal{A}_{1}\right) ,\left( W_{2},\mathcal{A}_{2}\right)
\right) $ as having the effect of the array rewriting rule $\mathcal{A}%
_{1}\cup \mathcal{A}_{2}\rightarrow \mathcal{A}_{1}$, i.e., in the context
of $\mathcal{A}_{1}$ we delete $\mathcal{A}_{2}$; in this case, we speak of
an \textit{array deletion rule }and write $D\left( \left( W_{1},\mathcal{A}%
_{1}\right) ,\left( W_{2},\mathcal{A}_{2}\right) \right) $ or, if $%
shape\left( \mathcal{A}_{i}\right) =W_{i}$, $i\in \left\{ 1,2\right\} $,
then even only $D\left( \mathcal{A}_{1},\mathcal{A}_{2}\right) $. For any
(contextual, insertion, deletion) array rule $\left( \left( W_{1},\mathcal{A}%
_{1}\right) ,\left( W_{2},\mathcal{A}_{2}\right) \right) $ we define its
norm by 
\begin{equation*}
\left\Vert \left( \left( W_{1},\mathcal{A}_{1}\right) ,\left( W_{2},\mathcal{%
A}_{2}\right) \right) \right\Vert =\left\Vert W_{1}\cup W_{2}\right\Vert .
\end{equation*}%
The norm of the set of contextual array productions in $G$, $\left\Vert
P\right\Vert $, is defined by 
\begin{equation*}
\left\Vert P\right\Vert =\max \left\{ \,\left\Vert W_{1}\cup
W_{2}\right\Vert \mid \left( \left( W_{1},\mathcal{A}_{1}\right) ,\left(
W_{2},\mathcal{A}_{2}\right) \right) \in P\,\right\} .
\end{equation*}

Let $G_{A}$ be a $d$-\-di\-men\-sio\-nal array grammar $\left( \left[
V^{\ast d}\right] ,\left[ T^{\ast d}\right] ,\left[ \mathcal{A}_{0}\right]
,P,\Longrightarrow _{G_{A}}\right) $ with $P$ containing array insertion and
deletion rules. The norm of the set of array insertion and deletion
rules in $G$, $\left\Vert P\right\Vert $, is defined as for a set of 
contextual array productions, i.e., we again define 
\begin{equation*}
\left\Vert P\right\Vert =\max \left\{ \,\left\Vert W_{1}\cup
W_{2}\right\Vert \mid \left( \left( W_{1},\mathcal{A}_{1}\right) ,\left(
W_{2},\mathcal{A}_{2}\right) \right) \in P\,\right\} .
\end{equation*}%
For $G_{A}$ we consider the array languages $L_{\ast }\left( G_{A}\right) $
and $L_{t}\left( G_{A}\right) $ generated by $G_{A}$ in the modes $\ast $
and $t$, respectively; the corresponding families of array languages are
denoted by $\mathcal{L}_{\mathcal{\delta }}\left( d\text{-}DIA\right) $, $%
\delta \in \left\{ \ast ,t\right\} $; if only array insertion (i.e.,
contextual) rules are used, we have the case of pure grammars, and we also
write $\mathcal{L}_{\mathcal{\delta }}\left( d\text{-}CA\right) $. For
interesting relations between the families of array languages $\mathcal{L}_{%
\mathcal{\ast }}\left( d\text{-}CA\right) $ and $\mathcal{L}_{t}\left( d%
\text{-}CA\right) $ as well as $\mathcal{L}_{\mathcal{\ast }}\left( d\text{-}%
\#\text{-}CFA\right) $ and $\mathcal{L}_{\mathcal{\ast }}\left( d\text{-}%
CFA\right) $ we refer the reader to \cite{Freundetal2007}.

In the following, instead of using the notation $\left( \left[ V^{\ast d}%
\right] ,\left[ T^{\ast d}\right] ,\left[ \mathcal{A}_{0}\right]
,P,\Longrightarrow _{G_{A}}\right) $ for a $d$-dimensional array grammar of
a specific type, we may also simply write $\left( V,T,\left[ \mathcal{A}_{0}%
\right] ,P\right) $. For contextual array grammars or for array grammars
only containing array insertion rules we may even write $\left( V,\left[ 
\mathcal{A}_{0}\right] ,P\right) $.

Our first example shows how we can generate one-dimensional arrays of the
form $LE^{n}\bar{S}E^{m}R$, $n,m\geq 1$, with a contextual array grammar
containing only rules of norm $1$:

\begin{example}
\label{line}Consider the contextual array grammar 
\begin{equation*}
G_{line}=\left( \left\{ \bar{S},E,L,R\right\} ,E\bar{S}E,P\right) \text{
with }P=\left\{ \fbox{$E$}E,E\fbox{$E$},\fbox{$E$}R,L\fbox{$E$}\right\} ;
\end{equation*}%
in order to represent the contextual array (or array insertion and deletion)
rules in a depictive way, the symbols of the selector are enclosed in
boxes). Obviously, $\left\Vert P\right\Vert =1$. Starting from the axiom $E%
\bar{S}E$, the sequence of symbols $E$ is prolonged to the right by the
contextual array rule $\fbox{$E$}E$ and prolonged to the left by the
contextual array rule $E\fbox{$E$}$. The derivation only halts as soon as we
have used both the rules $\fbox{$E$}R$ and $L\fbox{$E$}$ to introduce the
right and left endmarkers $R$ and $L$, respectively. In sum, we obtain 
\begin{equation*}
\left[ L_{t}\left( G_{line}\right) \right] =\left\{ LE^{n}\bar{S}E^{m}R\mid
n,m\geq 1\right\} ,
\end{equation*}%
whereas%
\begin{equation*}
\left[ L_{\ast }\left( G_{line}\right) \right] =\left\{ LE^{n}\bar{S}%
E^{m}R,E^{n}\bar{S}E^{m}R,LE^{n}\bar{S}E^{m},E^{n}\bar{S}E^{m}\mid n,m\geq
1\right\} .
\end{equation*}
\end{example}

\section{(Sequential) P Systems}

For controlling the derivations in an array grammar with array insertion and
deletion rules, in \cite{Freundetal2013} the model of sequential P systems
using array insertion and deletion rules was considered. In general, for
arbitrary types of underlying grammars, P systems are defined as follows:

\smallskip

A \textit{(sequential) P system of type }$X$\textit{\ with tree height }$n$
is a construct $\Pi =\left( G,\mu ,R,i_{0}\right) $ where

\begin{itemize}
\item $G=\left( O,O_{T},A,P,\Longrightarrow _{G}\right) $ is a sequential
grammar of type $X$;

\item $\mu $ is the membrane (tree) structure of the system with the height
of the tree being $n$ ($\mu $ usually is represented by a string containing
correctly nested marked parentheses); we assume the membranes to be the
nodes of the tree representing $\mu $ and to be uniquely labelled by labels
from a set $Lab$;

\item $R$ is a set of rules of the form $\left( h,r,tar\right) $ where $h\in
Lab$, $r\in P$, and $tar$, called the \textit{target indicator}, is taken
from the set $\left\{ here,in,out\right\} \cup \left\{ in_{h}\mid h\in
Lab\right\} $; $R$ can also be represented by the vector $\left(
R_{h}\right) _{h\in Lab}$, where $R_{h}=\left\{ \left( r,tar\right) \mid
\left( h,r,tar\right) \in R\right\} $ is the set of rules assigned to
membrane $h$;

\item $i_{0}$ is the initial membrane containing the axiom $A$.
\end{itemize}

As we only have to follow the trace of a single object during a computation
of the P system, a configuration of $\Pi $ can be described by a pair $%
\left( w,h\right) $ where $w$ is the current object (e.g., string or array)
and $h$ is the label of the membrane currently containing the object $w$.
For two configurations $\left( w_{1},h_{1}\right) $ and $\left(
w_{2},h_{2}\right) $ of $\Pi $ we write $\left( w_{1},h_{1}\right)
\Longrightarrow _{\Pi }\left( w_{2},h_{2}\right) $ if we can pass from $%
\left( w_{1},h_{1}\right) $ to $\left( w_{2},h_{2}\right) $ by applying a
rule $\left( h_{1},r,tar\right) \in R$, i.e., $w_{1}\Longrightarrow
_{r}w_{2} $ and $w_{2}$ is sent from membrane $h_{1}$ to membrane $h_{2}$
according to the target indicator $tar$. More specifically, if $tar=here$,
then $h_{2}=h_{1}$; if $tar=out$, then the object $w_{2}$ is sent to the
region $h_{2}$ immediately outside membrane $h_{1}$; if $tar=in_{h_{2}}$,
then the object is moved from region $h_{1}$ to the region $h_{2}$
immediately inside region $h_{1}$; if $tar=in$, then the object $w_{2}$ is
sent to one of the regions immediately inside region $h_{1}$.

A sequence of transitions between configurations of $\Pi $, starting from
the initial configuration $\left( A,i_{0}\right) $, is called a \textit{%
computation} of $\Pi $. A\textit{\ halting computation} is a computation
ending with a configuration $\left( w,h\right) $ such that no rule from $%
R_{h}$ can be applied to $w$ anymore; $w$ is called the \textit{result} of
this halting computation if $w\in O_{T}$. As the language generated by $\Pi $
we consider $L_{t}\left( \Pi \right) $ which consists of all terminal
objects from $O_{T}$ being results of a halting computation in $\Pi $.

By $\mathcal{L}_{t}\left( X\text{-}LP\right) $ ($\mathcal{L}_{t}\left( X%
\text{-}LP^{\left\langle n\right\rangle }\right) $) we denote the family of
languages generated by P systems (of tree height at most $n$) using grammars
of type $X$. If only the targets $here$, $in$, and $out$ are used, then the
P system is called \textit{simple}, and the corresponding families of
languages are denoted by $\mathcal{L}_{t}\left( X\text{-}LsP\right) $ ($%
\mathcal{L}_{t}\left( X\text{-}LsP^{\left\langle n\right\rangle }\right) $).

In the string case (see \cite{Freundetal2012}), the operations of left and
right insertion ($I)$ of strings of length $m$ and left and right deletion ($%
D$) of strings of length $k$ were investigated; the corresponding types are
abbreviated by $D^{k}I^{m}$. Every language $L\subseteq T^{\ast }$ in $%
\mathcal{L}_{\ast }\left( D^{1}I^{1}\right) $ can be written in the form $%
T_{l}^{\ast }ST_{r}^{\ast }$ where $T_{l},T_{r}\subseteq T$ and $S$ is a
finite subset of $T^{\ast }$. Using the regulating mechanism of P systems,
we get $\left\{ a^{2^{n}}\mid n\geq 0\right\} \in \mathcal{L}_{t}\left(
D^{1}I^{2}\text{-}LP^{\left\langle 1\right\rangle }\right) $ and even obtain
computational completeness:

\begin{theorem}
\label{theoremstrings}(see \cite{Freundetal2012}) $\mathcal{L}_{t}\left(
D^{1}I^{1}\text{-}LsP^{\left\langle 8\right\rangle }\right) =RE$.
\end{theorem}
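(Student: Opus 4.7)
The plan is to prove $RE\subseteq\mathcal{L}_t(D^1I^1\text{-}LsP^{\langle 8\rangle})$ by simulating a register machine in a standard normal form; the reverse inclusion follows from the fact that P systems of this type admit a finitary description and therefore only algorithmic computations. So I would fix a register machine $M$ generating an arbitrary $L\in RE$ with instructions $\mathrm{ADD}(i,\ell')$, $\mathrm{SUB}(i,\ell',\ell'')$ (decrement if nonzero, otherwise jump to $\ell''$), and a $\mathrm{HALT}$ that successively emits the symbols of the output string.

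The encoding I would use represents a configuration of $M$ at instruction label $\ell$ with register values $(v_1,\dots,v_k)$ and partial output $w$ by a string of the form $w\,c\,a_1^{v_1}\,b\,a_2^{v_2}\,b\,\cdots b\,a_k^{v_k}\,\#$, where $c$ is a state-separator, $b$ a register-separator, and $\#$ a right endmarker; the current instruction label is encoded not in the string but in the label of the membrane currently holding it. Since insertions and deletions act only at the two ends of the string, every manipulation of an interior register $i$ will have to proceed by a rotation: repeatedly delete the rightmost symbol and reinsert a matching symbol on the left, until the $i$-th block is at one of the ends and can be incremented or decremented; then a second rotation restores the original cyclic order. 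Terminal symbols are inserted into $w$ only when the machine reaches $\mathrm{HALT}$, so that any configuration that tries to output before halting leads to a non-terminal string and is filtered out.

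Within this scheme each instruction is simulated by a small fixed-depth subtree of membranes that is reused across all instructions of the same type; by routing the object via the $in/out$ targets through these subtrees in a pipeline fashion one keeps the overall tree height bounded by an absolute constant, which after careful bookkeeping is $8$. The ADD instruction is straightforward: push the object into the "rotate-right-into-position" subtree, insert $a_i$ at the right end, rotate back, and exit to the subtree of the successor instruction. The $\mathrm{SUB}$ instruction is the delicate point: it must test whether the $i$-th block is empty without having explicit state inspection. I would implement it by attempting the deletion of a single $a_i$ while the block is at the right end; if the deletion succeeds, the next symbol encountered is the separator $b$, and a dedicated membrane chain verifies this and routes to $\ell'$; if the block is empty, the first symbol after rotating into position is already $b$, a different chain detects this and routes to $\ell''$. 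Any branch that guesses wrong is engineered to re-enter a trap membrane whose rules never terminate, so that only faithful simulations survive to halt.

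The main obstacle, as usual for this kind of theorem, is precisely this zero-test: with only single-symbol end-operations one has no primitive way of "looking" at the string, so correctness and faithfulness of the SUB simulation both have to be enforced through the membrane topology and through ensuring that any erroneous guess leads to a non-halting computation. I would therefore devote most of the proof to defining the SUB module in detail, proving by induction on computation length that halting computations of the P system are in bijection with halting computations of $M$, and verifying that, after sharing rotation and test modules across all instructions of $M$, the depth of the membrane tree never exceeds $8$.
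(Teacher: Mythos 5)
First, note that the paper does not actually prove this theorem: it is imported verbatim from \cite{Freundetal2012} and used only to derive Corollary~\ref{theorem1arrays}, so there is no in-paper argument to compare against. In the cited work the result is obtained by simulating circular Post machines, a model whose instructions are \emph{natively} of the form ``consume a symbol at one end, emit a symbol at the other end''; this choice absorbs your entire ``rotation'' machinery into the machine being simulated and is what makes the very specific height bound $8$ auditable. Your plan --- register machines plus an explicit rotate-into-position subroutine --- is the same genre of argument, but it is strictly harder to carry out and, as written, has genuine gaps.

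Two of these gaps are concrete. First, you place the current instruction label in the membrane currently holding the object, yet you also propose to \emph{share} the rotation and test subtrees across all instructions ``of the same type''. While the object is inside a shared subtree the membrane label no longer determines the instruction, and you have forbidden yourself from encoding the label in the string; so after rotating register $i$ into position the system cannot know which successor label to route to, nor even how far to rotate (which depends on $i$, hence on the instruction). You must either put the label (and the rotation counter) into the string --- at which point the encoding and the SUB test have to be redesigned --- or give each instruction its own subtree, which is fine for the height bound (the tree may have unbounded width) but removes the stated reason for sharing. Second, the constant $8$ is the entire nontrivial content of the statement beyond bare computational completeness, and ``after careful bookkeeping'' is not a proof of it: each single-symbol transfer in your rotation costs a round trip through a membrane chain whose depth you never tally, and the SUB branch-verification chains add further levels. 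A correct write-up must exhibit the membrane tree explicitly and count its depth; with your rotation subroutine it is not evident that $8$ is attainable, whereas the circular-Post-machine route of \cite{Freundetal2012} needs no rotation subroutine at all. (Your zero-test idea is salvageable: in this model a right-deletion rule $D(a)$ for a specific symbol $a$ is applicable only when the rightmost symbol is $a$, so the branch is selected by rule applicability rather than by blind guessing plus traps; you should say this explicitly, since otherwise ``a dedicated membrane chain verifies this'' has no mechanism behind it. The converse inclusion via a Church--Turing argument is fine.)
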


One-dimensional arrays can also be interpreted as strings; left/right
insertion of a symbol $a$ corresponds to taking the set containing all rules 
$I\left( 
\begin{array}[b]{cc}
a & \fbox{${b}$}%
\end{array}%
\right) $/$I\left( 
\begin{array}[b]{cc}
\fbox{${b}$} & a%
\end{array}%
\right) $ for any $b$; left/right deletion of a symbol $a$ corresponds to
taking the rule $D\left( 
\begin{array}[b]{cc}
\fbox{${\#}$} & a%
\end{array}%
\right) $/$D\left( 
\begin{array}[b]{cc}
a & \fbox{${\#}$}%
\end{array}%
\right) $; these array insertion and deletion rules have norm one, but the
array deletion rules also sense for the blank symbol $\#$ in the selector.
Hence, from Theorem~\ref{theoremstrings}, we immediately infer the following
result, with $D^{k}I^{m}A$ denoting the type of array grammars using array
deletion and insertion rules of norms at most $k$ and $m$, respectively:

\begin{corollary}
\label{theorem1arrays}$\mathcal{L}_{t}\left( 1\text{-}D^{1}I^{1}A\text{-}%
LsP^{\left\langle 8\right\rangle }\right) =\mathcal{L}_{\ast }\left( 1\text{-%
}ARBA\right) $.
\end{corollary}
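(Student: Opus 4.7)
The plan is to establish the two inclusions separately. The inclusion $\subseteq$ is immediate from general computability: any P system with a recursive set of array insertion and deletion rules can be simulated by a Turing machine, so every array language it generates is recursively enumerable, hence lies in $\mathcal{L}_{\ast}(1\text{-}ARBA)$, which coincides with the family of all recursively enumerable one-dimensional array languages.

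For the substantive inclusion $\supseteq$, the idea is to exploit the identification of one-dimensional arrays with strings sketched immediately before the statement of the corollary. Given $L \in \mathcal{L}_{\ast}(1\text{-}ARBA)$, I view every array $\mathcal{A} \in L$ as a string over the alphabet $V$, using a fixed canonical representative of its translation class. The resulting string language $L'$ is recursively enumerable, so Theorem~\ref{theoremstrings} supplies a simple P system $\Pi'$ of tree height at most $8$, using only $D^{1}I^{1}$ string insertion/deletion rules, with $L_{t}(\Pi') = L'$.

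From $\Pi'$ I then construct a one-dimensional array P system $\Pi$ of the same membrane structure, targets, and axiom (the latter now viewed as a contiguous $1$-d array) by replacing each string rule with the corresponding set of array insertion/deletion rules of norm one listed in the excerpt: left insertion of $a$ becomes $\{I(a,\fbox{${b}$}) \mid b \in V\}$, right insertion of $a$ becomes $\{I(\fbox{${b}$},a) \mid b \in V\}$, left deletion of $a$ becomes $D(\fbox{${\#}$},a)$, and right deletion of $a$ becomes $D(a,\fbox{${\#}$})$. All target indicators and membrane labels are carried over unchanged, so $\Pi$ is again a simple P system of tree height at most~$8$ using only rules from $1\text{-}D^{1}I^{1}A$.

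The correctness of this translation rests on a single invariant: along every derivation of $\Pi$ the current array is contiguous, i.e.\ its shape is a (possibly empty) interval of $\mathbb{Z}$. The axiom is contiguous by construction; each insertion rule of the above form fires only adjacent to an existing non-blank symbol, and each deletion rule senses a blank on one side and hence only strips an extremal symbol. Consequently contiguity is preserved throughout, the derivations of $\Pi'$ and of $\Pi$ stand in step-for-step bijection, halting configurations correspond, and $L_{t}(\Pi) = L$. The main point to verify is precisely this contiguity invariant together with the absence of spurious array rule applications that do not correspond to a string rule of $\Pi'$; once the invariant is in place the argument is routine.
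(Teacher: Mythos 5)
Your proposal takes exactly the route the paper itself takes: Corollary~\ref{theorem1arrays} is presented there as an immediate consequence of Theorem~\ref{theoremstrings} via precisely the rule translation you give (left/right insertion of $a$ turned into the set of norm-one array insertion rules over all context symbols $b$, left/right deletion into the norm-one deletion rules sensing $\#$ in the selector), with your contiguity invariant being the implicit reason the translation is step-for-step faithful. The one point worth flagging is a limitation your argument shares with the paper rather than introduces: by your own invariant every array derivable in the translated system has an interval as its shape, so the argument establishes the inclusion $\supseteq$ only for languages of contiguous arrays, whereas $\mathcal{L}_{\ast}\left(1\text{-}ARBA\right)$ also contains languages whose members have gaps (such as $a^{2}\#^{3}a\#a$ from Section~\ref{arraygrammars}); the full equality is in any case subsumed by Theorem~\ref{main}, whose height-two construction produces non-contiguous terminal arrays by erasing the workspace symbols $E$.
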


\medskip

With respect to the tree height of the simple P systems, this result will be
improved considerably in Section~\ref{comp}. One-\-di\-men\-sio\-nal array
grammars with only using array insertion and deletion rules of norm at most
two will be shown to be computationally complete in Section~\ref{comp2}.

\subsection{Encoding the Post Correspondence Problem With Array Insertion P
Systems}

An instance of the \textit{Post Correspondence Problem} is a pair of
sequences of non-empty strings $\left( u_{1},\ldots ,u_{n}\right) $ and $%
\left( v_{1},\ldots ,v_{n}\right) $ over an alphabet $T$. A solution of this
instance is a sequence of indices $i_{1},\ldots ,i_{k}$ such that $%
u_{i_{1}}\ldots u_{i_{k}}=v_{i_{1}}\ldots v_{i_{k}}$; we call $%
u_{i_{1}}\ldots u_{i_{k}}$ the result of this solution. Let 
\begin{equation*}
\left\{ u_{i_{1}}\ldots u_{i_{k}}\mid i_{1},\ldots ,i_{k}\text{ is a
solution of }\left( \left( u_{1},\ldots ,u_{n}\right) ,\left( v_{1},\ldots
,v_{n}\right) \right) \right\}
\end{equation*}%
be the set of results of all solutions of the instance $\left( \left(
u_{1},\ldots ,u_{n}\right) ,\left( v_{1},\ldots ,v_{n}\right) \right) $ of
the Post Correspondence Problem, denoted by $L\left( \left( u_{1},\ldots
,u_{n}\right) ,\left( v_{1},\ldots ,v_{n}\right) \right) $.

We now show how $L\left( \left( u_{1},\ldots ,u_{n}\right) ,\left(
v_{1},\ldots ,v_{n}\right) \right) $ can be represented in a very specific
way as the language generated by an array insertion P system. Consider the
homomorphism $h_{\Sigma }$ defined by $h_{\Sigma }:\Sigma \rightarrow 
\Sigma \Sigma ^{\prime }$ with $h_{\Sigma }\left( a\right) =aa^{\prime }$ 
for all $a\in \Sigma $.

\begin{lemma}
Let $I=\left( \left( u_{1},\ldots ,u_{n}\right) ,\left( v_{1},\ldots
,v_{n}\right) \right) $ be an instance of the Post Correspondence Problem
over $T$. Then we can effectively construct a one-dimensional array
insertion P system $\Pi $ such that 
\begin{equation*}
\left[ L\left( \Pi \right) \right] =\left\{ LL^{\prime }h_{T}\left( w\right)
RR^{\prime }\mid w\in L\left( \left( u_{1},\ldots ,u_{n}\right) ,\left(
v_{1},\ldots ,v_{n}\right) \right) \right\} .
\end{equation*}
\end{lemma}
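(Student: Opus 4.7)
The plan is to construct a P system $\Pi$ whose halting computations non-deterministically guess a sequence of PCP indices $i_1,\ldots,i_k$, build the ``matched'' array $LL'h_T(w)RR'$ with $w=u_{i_1}\cdots u_{i_k}$ on an unprimed $u$-track and simultaneously $w=v_{i_1}\cdots v_{i_k}$ on a primed $v$-track, and terminate with a terminal array iff the two tracks actually agree. The interleaved form $a a'$ of $h_T$ will be produced only when the $u$- and $v$-letters match position by position, so terminal halting will coincide with the PCP solution condition.

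I would take the axiom to be the short marker array $L L' X R R'$ placed in a dispatcher membrane~$0$, where $X$ is a non-terminal bookkeeping symbol carrying auxiliary counters and pointers to track how far each track has been extended. For each PCP pair $(u_i,v_i)$ I would nest a dedicated membrane $M_i$ inside~$0$; the target indicator $in_{M_i}$ corresponds to guessing that the next chosen index is~$i$. Inside $M_i$, norm-one insertion rules in the style of Example~\ref{line} append the letters of $u_i$ as unprimed symbols and the letters of $v_i$ as primed symbols into the working region around~$X$, arranged so as to build successive $h_T$-pairs $aa'$; auxiliary counter symbols inside $X$ force exactly $|u_i|$ unprimed and $|v_i|$ primed insertions per excursion before an exit rule may send the object back to~$0$. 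From~$0$ the object may re-enter some $M_j$ (guessing another index) or enter a closing membrane~$f$ which, again by norm-one insertions, can absorb the bookkeeping symbol~$X$ only when the $u$- and $v$-pointers have both reached the right endmarkers, i.e.\ when the two tracks have been completely and consistently exhausted.

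The main obstacle is soundness: because insertion-only rules leave every non-terminal marker in place forever, the construction has to make mismatches permanently visible. I would achieve this by providing only the pair-forming insertions $I(\dots,a\,a')$ for each $a\in T$, i.e.\ no rule of the form $I(\dots,a\,b')$ with $a\neq b$ is placed in~$P$; any attempt by the simulation to pair a $u$-letter with a distinct $v$-letter then leaves a stranded non-terminal symbol in~$X$ whose counter can never be discharged, so the object is either trapped inside $M_i$ (no halting there) or, if forced out, cannot satisfy the exit gate of $f$. Thus no mismatched guess produces a terminal array. The converse direction is routine: any PCP solution $i_1,\ldots,i_k$ with result~$w$ admits the obvious successful schedule $M_{i_1},M_{i_2},\ldots,M_{i_k},f$ that emits exactly $LL'h_T(w)RR'$. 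Combining the two directions yields $[L(\Pi)]=\{LL'h_T(w)RR'\mid w\in L(I)\}$, as required.
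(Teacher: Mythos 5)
Your high-level goal (two interleaved tracks, non-determinism over PCP indices realized by membranes, matching enforced locally) is the same as the paper's, but two of your central mechanisms are incompatible with the model and would make the construction fail. First, the lemma concerns an array \emph{insertion} (i.e., contextual) P system: rules only adjoin symbols at blank positions and never rewrite or remove anything. Your bookkeeping symbol $X$ with mutable counters and pointers therefore cannot be updated, and your closing membrane cannot ``absorb'' it; every symbol of the axiom survives into every halting configuration, so no computation of your system could ever output exactly $LL^{\prime }h_{T}\left( w\right) RR^{\prime }$. For the same reason you should not place $RR^{\prime }$ in the axiom: the right end has to be appended by the very last rule, precisely because the applicability of that rule (selector ``last unprimed letter $a$ immediately followed by its primed partner $a^{\prime }$, nothing beyond'') is what certifies that the two tracks have reached the same length. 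The paper keeps the axiom down to $LL^{\prime }$ and stores all bookkeeping implicitly in the geometry of the partially built array and in the label of the membrane currently holding it.

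Second, your soundness mechanism --- admitting only pair insertions $I(\ldots ,aa^{\prime })$ --- cannot simulate a PCP step, because $\left\vert u_{i}\right\vert $ and $\left\vert v_{i}\right\vert $ are in general different: after choosing index $i$, one track runs ahead of the other, so the letters added in a single excursion do not pair up. The paper resolves this by interleaving the two tracks positionally (the $j$-th letter contributed by the $u$-side sits at an even offset, the $j$-th letter of the $v$-side at the following odd offset) and by putting the consistency check into the \emph{selectors} of the insertion rules: when the $u$-track is extended over positions where the $v$-track already exists, the selector requires the primed letter already present there to be exactly the primed copy of the unprimed letter being inserted, and where the $v$-track has not yet arrived the selector requires $\#$ (with the constraint that once a position is unchecked, all later ones are too). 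This enforces matching exactly on the overlap while still letting the two tracks grow at different rates. Without such a context-based check, or an equivalent substitute, mismatched guesses are not blocked and length overhangs cannot be produced at all, so both soundness and completeness of your construction break.
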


\begin{proof}
The main idea for constructing the one-dimensional array insertion P system%
\begin{equation*}
\Pi =\left( G,\left[ \hspace*{0.01cm}_{_{0}}\left[ \hspace*{0.01cm}_{_{1}}%
\hspace*{0.01cm}\hspace*{0.1cm}\right] \hspace*{0.01cm}_{_{1}}\ldots \left[ 
\hspace*{0.01cm}_{_{n}}\hspace*{0.01cm}\hspace*{0.1cm}\right] \hspace*{0.01cm%
}_{_{n}}\ldots \left[ \hspace*{0.01cm}_{_{n+1}}\hspace*{0.01cm}\hspace*{0.1cm%
}\right] \hspace*{0.01cm}_{_{n+1}}\right] \hspace*{0.01cm}_{_{0}},R,0\right)
\end{equation*}%
with 
\begin{equation*}
G=\left( \left\{F, L,R,L^{\prime },R^{\prime }\right\} \cup T\cup T^{\prime
},LL^{\prime },P\cup \left\{ \fbox{$a$}\fbox{$a^{\prime }$}RR^{\prime }\mid
a\in T\right\} \right)
\end{equation*}%
is to generate sequences $u_{i_{1}}\ldots u_{i_{k}}$ and $\left( v_{i_{1}}
\right) ^{\prime } \ldots \left( v_{i_{k}}\right) ^{\prime }$ for sequences of indices
$i_{1},\ldots ,i_{k}$ in an interleaving way, i.e., each symbol $a\in T$ from
the first sequence is followed by the corresponding primed symbol $a^{\prime
}$ in the second sequence; as soon as both sequences have reached the same
length, i.e., if we have got the encoding of a solution for this instance of
the Post Correspondence Problem, we may use a fitting contextual array rule $%
\fbox{$a$}$ $\fbox{$a^{\prime }$}RR^{\prime }$ for some $a\in T$ to stop the
derivation in $\Pi $.

For generating these interleaving sequences of symbols $a$ and $a^{\prime }$
we take the following rules into $P$:

For prolonging the first sequence by $u_i$, 
$u_i=u_{i}\left( 1\right) \ldots u_{i}\left( \left\vert u_{i}\right\vert \right) $,
we add all rules of the form 
\begin{quotation}
$\fbox{$u_{i}\left( 0\right) $}\fbox{$u_{i}\left( 0\right) ^{\prime \prime }$}%
u_{i}\left( 1\right) \fbox{$u_{i}\left( 1\right) ^{\prime \prime }$}\ldots
u_{i}\left( \left\vert u_{i}\right\vert \right) \fbox{$u_{i}\left(
\left\vert u_{i}\right\vert \right) ^{\prime \prime }$}$
\end{quotation}%
with $u_{i}\left( 0\right) \in T\cup \left\{ L\right\} $ and $u_{i}\left(
j\right) ^{\prime \prime }\in \left\{ u_{i}\left( j\right) ^{\prime
},\#\right\} $, $0\leq j\leq \left\vert u_{i}\right\vert $, $1\leq i\leq n$,
fulfilling the following constraints:

-- if $u_{i}\left( 0\right) =L$ then $u_{i}\left( 0\right) ^{\prime \prime
}=L^{\prime }$;

-- if $u_{i}\left( k\right) ^{\prime \prime }=\#$ for some $k\geq 0$, then $%
u_{i}\left( j\right) ^{\prime \prime }=\#$ for all $k\leq j\leq \left\vert
u_{i}\right\vert $;

-- if $u_{i}\left( k\right) ^{\prime \prime }=u_{i}\left( k\right) ^{\prime
} $ for some $k\geq 0$, then $u_{i}\left( j\right) ^{\prime \prime
}=u_{i}\left( j\right) ^{\prime }$ for all $j\leq k$.

In the same way, for prolonging the second sequence by $v_i$, 
$v_i=v_{i}\left( 1\right) \ldots v_{i}\left( \left\vert v_{i}\right\vert \right) $,
represented with primed symbols, we add all rules of the form 
\begin{quotation}
$\fbox{$v_{i}\left( 0\right) ^{\prime \prime }$}\fbox{$v_{i}\left( 0\right)
^{\prime }$}\fbox{$v_{i}\left( 1\right) ^{\prime \prime }$}v_{i}\left(
1\right) ^{\prime }\ldots \fbox{$v_{i}\left( \left\vert v_{i}\right\vert
\right) ^{\prime \prime }$}v_{i}\left( \left\vert v_{i}\right\vert \right)
^{\prime } $
\end{quotation}%
with $v_{i}\left( 0\right) \in T\cup \left\{ L\right\} $ and $v_{i}\left(
j\right) ^{\prime \prime }\in \left\{ v_{i}\left( j\right) ,\#\right\} $, $%
0\leq j\leq \left\vert v_{i}\right\vert $, $1\leq i\leq n$, fulfilling the
following constraints:

-- if $v_{i}\left( 0\right) ^{\prime }=L^{\prime }$ then $v_{i}\left(
0\right) ^{\prime \prime }=L$;

-- if $v_{i}\left( k\right) ^{\prime \prime }=\#$ for some $k\geq 0$, then $%
v_{i}\left( j\right) ^{\prime \prime }=\#$ for all $k\leq j\leq \left\vert
v_{i}\right\vert $;

-- if $v_{i}\left( k\right) ^{\prime \prime }=v_{i}\left( k\right) $ for
some $k\geq 0$, then $v_{i}\left( j\right) ^{\prime \prime }=v_{i}\left(
j\right) $ for all $j\leq k$.

The set $R$ consists of the following rules:

Starting from the axion $LL^{\prime }$, in membrane region $0$ we have all
rules 
\begin{quotation}
$\left( 0,I\left( \fbox{$u_{i}\left( 0\right) $}\fbox{$u_{i}\left( 0\right)
^{\prime \prime }$}u_{i}\left( 1\right) \fbox{$u_{i}\left( 1\right) ^{\prime
\prime }$}\ldots u_{i}\left( \left\vert u_{i}\right\vert \right) \fbox{$%
u_{i}\left( \left\vert u_{i}\right\vert \right) ^{\prime \prime }$}\right)
,in_{i}\right) ,$
\end{quotation}
i.e., when adding the sequence corresponding to the string $u_{i}$, the
resulting array is sent into membrane $i$, where the sequence of primed
strings corresponding to the string $v_{i}$ is added and the resulting array
is sent out again into the skin region $0$ using any of the rules%
\begin{quotation}
$\left( i,I\left( \fbox{$v_{i}\left( 0\right) ^{\prime \prime }$}\fbox{$%
v_{i}\left( 0\right) ^{\prime }$}\fbox{$v_{i}\left( 1\right) ^{\prime \prime
}$}v_{i}\left( 1\right) ^{\prime }\ldots \fbox{$v_{i}\left( \left\vert
v_{i}\right\vert \right) ^{\prime \prime }$}v_{i}\left( \left\vert
v_{i}\right\vert \right) ^{\prime }\right) ,out\right) .$
\end{quotation}

For the cases when no fitting rules for prolonging the array exist, we take
the rules 

\begin{quotation}
$\left( 0,I\left( \fbox{$X$}\fbox{$\#$} F\right) ,here\right) $
\end{quotation}
for any $X\in T\cup \left\{ F\right\}$ and 

\begin{quotation}
$\left( i,I\left( \fbox{$X$}\fbox{$\#$} F\right) ,here\right) $
\end{quotation}
for any $X\in T^{\prime }\cup \left\{ L^{\prime } ,F\right\}$ and $1\leq
i\leq n$.

The observation that (only) the application of an array insertion rule 
\begin{quotation}
$\left( 0,I\left( \fbox{$a$}\fbox{$a^{\prime }$}RR^{\prime }\right)
,in_{n+1}\right) $ 
\end{quotation}
for some $a\in T$ stops the derivation, with sending the
terminal array into membrane $n+1$, completes the proof.
$\hfill {}$
\end{proof}

\bigskip

As is well known (see \cite{Post46}), the Post Correspondence Problem is
undecidable, hence, the emptiness problem for $\mathcal{L}_{t}\left( DI\text{%
-}LP^{\left\langle 1\right\rangle }\right) $ is undecidable:

\begin{corollary}
For any $k\geq 1$, the emptiness problem for $\mathcal{L}_{t}\left( DI\text{-%
}LP^{\left\langle k\right\rangle }\right) $ is undecidable.
\end{corollary}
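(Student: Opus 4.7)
The approach is a straightforward reduction from the Post Correspondence Problem (PCP), invoking the Lemma just proved.

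First, given an instance $I = ((u_1,\ldots,u_n),(v_1,\ldots,v_n))$ of the PCP over $T$, I apply the Lemma to effectively construct a one-dimensional array insertion P system $\Pi_I$ such that
\[
[L(\Pi_I)] = \{LL'h_T(w)RR' \mid w \in L(I)\},
\]
where $L(I) := L((u_1,\ldots,u_n),(v_1,\ldots,v_n))$. Because the encoding $w \mapsto LL'h_T(w)RR'$ is injective (it just prepends and appends fixed non-empty endmarkers to the image of a pointwise homomorphism), we have $L(\Pi_I) = \emptyset$ if and only if $L(I) = \emptyset$, i.e., if and only if the instance $I$ has no solution.

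Next, I verify the parameters of $\Pi_I$. Inspecting the construction in the proof of the Lemma, the membrane structure consists of a skin membrane $0$ containing $n+1$ child membranes labelled $1, \ldots, n+1$, and hence has tree height exactly~$1$; moreover, only array insertion rules are used, so $\Pi_I$ is of type $DI$ (with the deletion component left vacuous). Thus $L(\Pi_I) \in \mathcal{L}_t(DI\text{-}LP^{\langle 1 \rangle})$, and the mapping $I \mapsto \Pi_I$ is effective.

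Finally, if emptiness for $\mathcal{L}_t(DI\text{-}LP^{\langle 1 \rangle})$ were decidable, composing the effective mapping $I \mapsto \Pi_I$ with such a decision procedure would decide the PCP, contradicting the undecidability result of Post~\cite{Post46}. For arbitrary $k \geq 1$, every P system of tree height at most $1$ is \emph{a fortiori} a P system of tree height at most $k$, yielding $\mathcal{L}_t(DI\text{-}LP^{\langle 1 \rangle}) \subseteq \mathcal{L}_t(DI\text{-}LP^{\langle k \rangle})$, so undecidability transfers to every $k \geq 1$. There is no real obstacle here: the only points requiring care are confirming the tree height produced by the Lemma and the injectivity of the encoding, both of which are immediate from inspection.
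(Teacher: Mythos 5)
Your proposal is correct and takes essentially the same route as the paper: the corollary is obtained directly from the Lemma by observing that the constructed insertion P system (tree height $1$, hence also height at most $k$ for every $k\geq 1$) has empty language exactly when the PCP instance has no solution, so a decision procedure for emptiness would decide PCP. The extra details you supply (injectivity of the encoding, effectiveness of $I\mapsto \Pi_I$, the inclusion $\mathcal{L}_{t}(DI\text{-}LP^{\langle 1\rangle})\subseteq\mathcal{L}_{t}(DI\text{-}LP^{\langle k\rangle})$) are all sound and merely make explicit what the paper leaves implicit.
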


\smallskip

For $d\geq 2$, even the emptiness problem for $\mathcal{L}_{t}\left( d\text{-%
}CA\right) $ is undecidable, which follows from the result obtained in \cite%
{Fernauetal2005}, where it was shown that every recursively enumerable
one-dimensional array language can be characterized as the projection of an
array language generated by a two-dimensional contextual array grammar using
rules of norm one only.

\section{Computational Completeness of Array Insertion and Deletion P
Systems Using Rules with Norm at Most One\label{comp}}

We now show the first of our main results: any recursively enumerable
one-di\-men\-sio\-nal array language can be generated by an array insertion
and deletion P system which only uses rules of norm at most one and the
targets $here$, $in$, and $out$ and whose membrane structure has only tree
height $2$; for two-dimensional array languages, the corresponding result
was established in \cite{Freundetal2013}.

\begin{theorem}
\label{main}$\mathcal{L}_{t}\left( 1\text{-}D^{1}I^{1}A\text{-}%
LsP^{\left\langle 2\right\rangle }\right) =\mathcal{L}_{\ast }\left( 1\text{-%
}ARBA\right) $.
\end{theorem}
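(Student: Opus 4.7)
The inclusion $\mathcal{L}_{t}\left( 1\text{-}D^{1}I^{1}A\text{-}LsP^{\left\langle 2\right\rangle }\right) \subseteq \mathcal{L}_{\ast }\left( 1\text{-}ARBA\right) $ is immediate: every computation of such a P system is effective, hence the generated language is recursively enumerable, and by Corollary~\ref{theorem1arrays} this family already contains every recursively enumerable one-dimensional array language.

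For the non-trivial inclusion, I would adapt the two-dimensional construction of \cite{Freundetal2013} to the one-dimensional case. Fix a Turing machine $M$ (in a suitable normal form) whose halting computations accept a string encoding of a given $L\in \mathcal{L}_{\ast }\left( 1\text{-}ARBA\right) $, and build a simple P system $\Pi =(G,\mu ,R,0)$ of tree height $2$ whose skin membrane $0$ contains a constant number $m$ of child membranes:
\[
\left[ \,_{0}\,\left[ \,_{1}\,\right] _{1}\,\left[ \,_{2}\,\right] _{2}\,\cdots \,\left[ \,_{m}\,\right] _{m}\,\right] _{0}.
\]
Each child membrane is dedicated to one ``micro-phase'' of the simulation of a single TM step: shift the head marker by one cell, rewrite a tape symbol, branch on the current state, check for halting, and so on. A TM step is realised as a short fixed sequence of norm-$1$ insertions and deletions carried out around a unique non-terminal marker that simultaneously encodes the current TM state and the current micro-phase; between consecutive micro-phases the array is sent \emph{out} to the skin and then back \emph{in} to some child, the latter choice being non-deterministic because the system is simple (no $\mathit{in}_{h}$ targets).

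Three ingredients make the construction go through. First, the state/phase marker is kept unique throughout every configuration, which is the only way to ensure that a norm-$1$ rule fires at the simulated head position rather than at some unrelated occurrence of the same single-symbol context. Secondly, every child membrane contains trap rules analogous to the $\left( i,I\left( \fbox{$X$}\fbox{$\#$}F\right) ,here\right) $ rules of the proof of the preceding lemma: if the array enters a child that does not match its current micro-phase, a trap symbol $F$ is inserted which no subsequent rule can remove, so the computation never halts. Thirdly, a dedicated cleanup child applies norm-$1$ deletion rules, exploiting the fact noted just after Theorem~\ref{theoremstrings} that the blank symbol $\#$ is allowed in a deletion selector, to strip the remaining non-terminal markers and endmarkers once an accepting TM configuration has been reached, so that halting coincides with yielding a terminal one-dimensional array.

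The main obstacle will be implementing the marker-shift micro-phases under the norm-$1$ restriction. In one dimension there is no room to ``park'' auxiliary symbols off to the side as in two dimensions, so the insert-and-delete sequence that moves the unique marker by one cell must pass through intermediate configurations that themselves still uniquely identify the head position; this is the delicate book-keeping part of the argument, reminiscent in spirit of the endmarker prolongation in Example~\ref{line} but complicated by the need to preserve marker uniqueness at every intermediate step. Once this marker discipline is fixed, counting the micro-phases per TM rule, designing the associated trap rules, and bounding $m$ by a constant depending only on $M$ becomes a routine verification, giving the desired computational completeness within tree height~$2$.
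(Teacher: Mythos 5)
Your overall architecture---a unique state/phase marker, membrane-dedicated micro-phases, trap rules for wrong nondeterministic choices, and a final cleanup membrane---is exactly the architecture of the paper's proof, and your easy direction is fine. But you have explicitly deferred the one step that carries the entire weight of the argument: how a single rewriting step is decomposed into norm-$1$ insertions and deletions whose \emph{order} is forced by the membrane structure while every intermediate array still determines unambiguously what must happen next. This is not routine, and marker uniqueness alone does not suffice: once a norm-$1$ deletion has removed part of the context (say the symbol $D$ to the right of the head), nothing in the array records which rule was being simulated, so the subsequent insertion cannot know what to put back. The paper resolves this by simulating array-grammar rules $\bar{A}_{l}vD_{l}\rightarrow B_{l}v\bar{C}_{l}$ (in a Chomsky-like normal form, rather than a Turing machine) through an eight-rule sequence that threads the array twice along the path skin $\rightarrow l_{1}\rightarrow l_{2}\rightarrow l_{1}\rightarrow$ skin, using two rule-indexed auxiliary symbols: an intermediate marker $\bar{D}_{l}^{\left( l\right) }$ at the head position and a flag $K_{l}$ parked next to the right endmarker $R$, which together carry the identity of the simulated rule across the round trips and are removed only when the simulation of that rule completes correctly. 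Some such device is a missing idea in your sketch, not a verification detail.

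A second concrete problem is your membrane structure. You announce tree height $2$ but draw $\left[ \,_{0}\,\left[ \,_{1}\,\right] _{1}\cdots \left[ \,_{m}\,\right] _{m}\,\right] _{0}$, which has height $1$; more importantly, a flat structure gives strictly less sequencing power in a \emph{simple} P system. Each excursion from the skin forces only one rule with target $in$ followed by rules inside a single child, whereas the paper's nested pairs $\left[ \,_{l_{1}}\left[ \,_{l_{2}}\,\right] _{l_{2}}\right] _{l_{1}}$ force four rule applications in a fixed order per excursion ($in$, $in$, $out$, $out$), which is what makes the eight-rule simulation of one grammar rule fit into two excursions. With only one level of children you would need more excursions per simulated step, and you would have to argue that the rule-indexed auxiliary symbols alone suffice to exclude interleavings of distinct half-finished simulations; this is exactly the kind of bookkeeping you wave at but do not supply. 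Finally, note that the paper first generates an arbitrarily long workspace $LE^{n}\bar{S}E^{m}R$ (as in Example~\ref{line}) before the simulation begins, which also provides the endmarker $R$ used as the anchor for $K_{l}$; your on-demand tape extension would need its own forced micro-phase sequence under the same norm-$1$ constraints.
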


\begin{proof}
The main idea of the proof is to construct the simple P system $\Pi $ of
type $1$-$DIA$ with a membrane structure of height two generating a
recursively enumerable one-di\-men\-sio\-nal array language $L_{A}$ given
by a grammar $G_{A}$ of type $1$-$ARBA$ in such a way that we first generate
the \textquotedblleft workspace\textquotedblright , i.e., the lines as
described in Example~\ref{line} and then simulate the rules of the
one-di\-men\-sio\-nal array grammar $G_{A}$ inside this \textquotedblleft
workspace\textquotedblright ; finally, the superfluous symbols $E$ and $L,R$
have to be erased to obtain the terminal array.

Now let $G_{A}=\left( \left[ \left( N\cup T\right) ^{\ast 1}\right] ,\left[
T^{\ast 1}\right] ,\left[ \mathcal{A}_{0}\right] ,P,\Longrightarrow
_{G_{A}}\right) $ be an array grammar of type $1$-$ARBA$ generating $L_{A}$.
In order to make the simulation in $\Pi $ easier, without loss of
generality, we may make some assumptions about the forms of the array rules 
in $P$: First of all, we may assume that the array rules are in a kind of
Chomsky normal form (e.g., compare \cite{Freund93}), i.e., only of the
following forms: $A\rightarrow B$ for $A\in N\ $and $B\in N\cup T\cup
\left\{ \#\right\} $ as well as $AvD\rightarrow BvC$ with $\left\Vert
v\right\Vert =1$ (i.e., $v\in \left\{ \left( 1\right) ,\left( -1\right)
\right\} $), $A,B,C\in N\cup T$, and $D\in N\cup T\cup \left\{ \#\right\} $
(we would like to emphasize that usually $A,B,C,D$ in the array rule 
$AvD\rightarrow BvC$ are not allowed to be terminal symbols); in a
more formal way, the rule $AvD\rightarrow BvC$ represents the rule $\left( W,%
\mathcal{A}_{1},\mathcal{A}_{2}\right) $ with $W=\left\{ \left( 0\right)
,v\right\} $, $\mathcal{A}_{1}=\left\{ \left( \left( 0\right) ,A\right)
,\left( v,D\right) \right\} $, and $\mathcal{A}_{2}=\left\{ \left( \left(
0\right) ,B\right) ,\left( v,C\right) \right\} $. As these rules in fact are
simulated in $\Pi $ with the symbol $E$ representing the blank symbol $\#$,
a rule $Av\#\rightarrow BvC$ now corresponds to a rule $AvE\rightarrow BvC$.
Moreover, a rule $A\rightarrow B$ for $A\in N\ $and $B\in N\cup T$ can be
replaced by the set of all rules $AvD\rightarrow BvD$ for all $D\in N\cup
T\cup \left\{ E\right\} $ and $v\in \left\{ \left( 1\right) ,\left(
-1\right) \right\} $, and $A\rightarrow \#$ can be replaced by the set of
all rules $AvD\rightarrow EvD$ for all $D\in N\cup T\cup \left\{ E\right\} $
and $v\in \left\{ \left( 1\right) ,\left( -1\right) \right\} $.

After these replacements described above, in the P system $\Pi $ we now only
have to simulate rules of the form $AvD\rightarrow BvC$ with $v\in \left\{
\left( 1\right) ,\left( -1\right) \right\} $ as well as $A,B,C,D\in N\cup
T\cup \left\{ E\right\} $. Yet in order to obtain a P system $\Pi $ with the
required features, we make another assumption for the rules to be simulated:
any intermediate array obtained during a derivation contains exactly one
symbol marked with a bar; as we only have to deal with sequential systems
where at each moment exactly one rule is going to be applied, this does not
restrict the generative power of the system as long as we can guarantee that
the marking can be moved to any place within the current array. Instead of a
rule $AvD\rightarrow BvC$ we therefore take the corresponding rule $\bar{A}%
vD\rightarrow Bv\bar{C}$; moreover, to move the bar from one position in the
current array to another position, we add all rules $\bar{A}vC\rightarrow Av%
\bar{C}$ for all $A,C\in N\cup T\cup \left\{ E\right\} $ and $v\in \left\{
\left( 1\right) ,\left( -1\right) \right\} $. We collect all these rules
obtained in the way described so far in a set of array rules $P^{\prime }$
and assume them to be uniquely labelled by labels from a set of labels $%
Lab^{\prime }$, i.e., $P^{\prime }=\left\{ l:\bar{A}_{l}vD_{l}\rightarrow
B_{l}v\bar{C}_{l}\mid l\in Lab^{\prime }\right\} $.

After all these preparatory steps we now are able to construct the simple P
system $\Pi $ with array insertion and deletion rules:
\begin{equation*}
\Pi =\left( G,\left[ \hspace*{0.01cm}_{_{0}}\left[ \hspace*{0.01cm}%
_{_{I_{1}}}\left[ \hspace*{0.01cm}_{_{I_{2}}}\hspace*{0.1cm}\right] \hspace*{%
0.01cm}_{_{I_{2}}}\hspace*{0.1cm}\right] \hspace*{0.01cm}_{_{I_{1}}}\ldots %
\left[ \hspace*{0.01cm}_{_{l_{1}}}\left[ \hspace*{0.01cm}_{_{l_{2}}}\hspace*{%
0.1cm}\right] \hspace*{0.01cm}_{_{l_{2}}}\hspace*{0.1cm}\right] \hspace*{%
0.01cm}_{_{l_{1}}}\ldots \left[ \hspace*{0.01cm}_{_{F_{1}}}\left[ \hspace*{%
0.01cm}_{_{F_{2}}}\hspace*{0.1cm}\right] \hspace*{0.01cm}_{_{F_{2}}}\hspace*{%
0.1cm}\right] \hspace*{0.01cm}_{_{F_{1}}}\right] \hspace*{0.01cm}%
_{_{0}},R,I_{2}\right)
\end{equation*}
with $I_{1}$ and $I_{2}$ being the membranes for generating the initial
lines, $F_{1}$ and $F_{2}$ are the membranes to extract the final terminal
arrays in halting computations, and $l_{1}$ and $l_{2}$ for all $l\in
Lab^{\prime }$ are the membranes to simulate the corresponding array rule
from $P^{\prime }$ labelled by $l$. The components of the underlying array
grammar $G$ can easily be collected from the description of the rules in $R$
as described below.

We start with the initial array $\mathcal{A}_{0}=E\bar{S}E$ from 
Example~\ref{line} and take all rules 

\begin{quotation}
$\left( I_{2},I\left( r\right) ,here\right) $ 
\end{quotation}

\noindent with all rules $r\in \left\{ \fbox{$E$}E,E\fbox{$E$}\right\} $, 
taken as array insertion rules; using the insertion rule 

\begin{quotation}
$\left( I_{2},I\left( L\fbox{$E$}\right) ,out\right) $ 
\end{quotation}
we get out of membrane $I_{2}$ into membrane region $%
I_{1}$, and by using 

\begin{quotation}
$\left( I_{1},I\left( \fbox{$E$}R\right) ,out\right) $
\end{quotation}
we move the initial line $LE^{n}\bar{S}E^{m}R$ for some $n,m\geq 1$ out into
the skin membrane.
\medskip

To be able to simulate a derivation from $G_{A}$ for a specific terminal
array, the workspace in this initial line has to be large enough, but as we
can generate such lines with arbitrary size, such an initial array can be
generated for any terminal array in $L_{\ast }\left( G_{A}\right) $.
\medskip

An array rule from $P^{\prime }=\left\{ l:\bar{A}_{l}vD_{l}\rightarrow B_{l}v%
\bar{C}_{l}\mid l\in Lab^{\prime }\right\} $ is simulated by applying the
following sequence of array insertion and deletion rules in the membranes $%
l_{1}$ and $l_{2}$, which send the array twice the path from the skin
membrane to membrane $l_{2}$ via membrane $l_{1}$ and back to the skin
membrane:
\medskip

\begin{quotation}
$\left( 0,I\left( \fbox{${R}$}K_{l}\right) ,in\right) $, 
$\left( l_{1},D\left( \fbox{${\bar{A}}_{l}$}vD_{l}\right) ,in\right) $, 
\end{quotation}

\begin{quotation}
$\left( l_{2},I\left( \fbox{${\bar{A}}_{l}$}v\bar{D}_{l}^{\left( l\right) }\right)
,out\right) $, $\left( l_{1},D\left( \fbox{$\bar{D}_{l}^{\left( l\right) }$}%
\left( -v\right) {\bar{A}}_{l}\right) ,out\right) $,
\end{quotation}

\begin{quotation}
$\left( 0,I\left( \fbox{$\bar{D}_{l}^{\left( l\right) }$}\left( -v\right)
B_{l}\right) ,in\right) $, $\left( l_{1},D\left( \fbox{$B_{l}$}v\bar{D}%
_{l}^{\left( l\right) }\right) ,in\right) $,
\end{quotation}

 \begin{quotation}
$\left( l_{2},D\left( \fbox{${R}
$}K_{l}\right) ,out\right) $, $\left( l_{1},I\left( \fbox{$B_{l}$}v{\bar{C}}%
_{l}\right) ,out\right) $.
\end{quotation}

\medskip

Whenever reaching the skin membrane, the current array contains exactly one
barred symbol. If we reach any of the membranes $l_{1}$ and/or $l_{2}$ with
the wrong symbols (which implies that none of the rules listed above is
applicable), we introduce the trap symbol $F$ by the rules 

\begin{quotation}
$\left( m,I\left(
F\fbox{${L}$}\right) ,out\right) $ and $\left( m,I\left( F\fbox{${F}$}%
\right) ,out\right) $ 
\end{quotation}

\noindent for $m\in \left\{ l_{1},l_{2}\mid l\in Lab^{\prime
}\cup \left\{ I\right\} \right\} $; as soon as $F$ has been introduced once,
with 

\begin{quotation}
$\left( 0,I\left( F\fbox{${F}$}\right) ,in\right) $ 
\end{quotation}

\noindent we can guarantee that the computation in $\Pi $ will never stop.

As soon as we have obtained an array representing a terminal array, the
corresponding array computed in $\Pi $ is moved into membrane $F_{1}$ by the
rule $\left( 0,D\left( R\right) ,in\right) $ (for any $X$, $D\left( X\right) 
$ / $I\left( K\right) $ just means deleting/inserting $X$ without taking
care of the context). In membrane $F_{1}$, the left endmarker $L$ and all
superfluous symbols $E$ as well as the marked blank symbol $\bar{E}$
(without loss of generality we may assume that at the end of the simulation
of a derivation from $G_{A}$ in $\Pi $ the marked symbol is $\bar{E}$) are
erased by using the rules $\left( F_{1},D\left( X\right) ,here\right) $ with 
$X\in \left\{ E,\bar{E},L\right\} $. The computation in $\Pi $ halts with
yielding a terminal array in membrane $F_{1}$ if and only if no other
non-terminal symbols have occurred in the array we have moved into $F_{1}$;
in the case that non-terminal symbols occur, we start an infinite loop
between membrane $F_{1}$ and membrane $F_{2}$ by introducing the trap symbol $F$: 
\begin{quotation}
$\left( F_{1},D\left( X\right) ,in\right) $ for $X\notin T\cup \left\{
E,\bar{E},L\right\} $ and $\left( F_{2},I\left( F\right) ,out\right) $.
\end{quotation}

As can be seen from the description of the rules in $\Pi $, we can simulate
all terminal derivations in $G_{A}$ by suitable computations in $\Pi $, and
a terminal array $\mathcal{A}$ is obtained as the result of a halting
computation (always in membrane $F_{1}$) if and only if $\mathcal{A}\in
L_{\ast }\left( G_{A}\right) $; hence, we conclude $L_{t}\left( \Pi \right)
=L_{\ast }\left( G_{A}\right) $.
$\hfill {}$
\end{proof}

\section{Computational Completeness of Array Grammars with Array Insertion
and Deletion Rules with Norms of at Most Two\label{comp2}}

When allowing array insertion and deletion rules with norms of at most two,
computational completeness can even be obtained without any additional
control mechanism (as for example, using P systems as considered in Section~%
\ref{comp}).

\begin{theorem}
\label{main2}$\mathcal{L}_{t}\left( 1\text{-}D^{2}I^{2}A\right) =\mathcal{L}%
_{\ast }\left( 1\text{-}ARBA\right) $.
\end{theorem}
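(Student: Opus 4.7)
The plan is to establish both inclusions. The inclusion $\mathcal{L}_{t}\left( 1\text{-}D^{2}I^{2}A\right) \subseteq \mathcal{L}_{\ast }\left( 1\text{-}ARBA\right)$ is immediate: the family $\mathcal{L}_{\ast }\left( 1\text{-}ARBA\right)$ coincides with the class of recursively enumerable one-dimensional array languages, and any $1$-$D^{2}I^{2}A$ grammar is a recursive device whose $t$-mode language is therefore recursively enumerable.

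For the converse direction the strategy is to reuse the blueprint of Theorem~\ref{main} while internalising the membrane control into the array itself, exploiting the wider selector window afforded by norm~$2$. Given $G_{A}$ of type $1$-$ARBA$, first bring it into the Chomsky-like normal form of Theorem~\ref{main}, with productions of the shape $\bar A v D \to B v \bar C$ for $v \in \{(1),(-1)\}$ and $A,B,C,D \in N \cup T \cup \{E\}$, together with bar-moving rules. The workspace $L E^{n} \bar S E^{m} R$ is then generated from the axiom $E \bar S E$ by the norm-$1$ contextual rules of Example~\ref{line}, maintaining the invariant that at most one barred symbol is present in the interior.

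The central idea is to encode the current step of every rule's simulation into the identity of the (unique) barred symbol together with an accompanying intermediate marker, so that at any moment during a script only the rules belonging to that script are applicable. A rule $l : \bar A_{l} v D_{l} \to B_{l} v \bar C_{l}$ is simulated by a local eight-step script that mirrors the P-system script of Theorem~\ref{main}: $\bar A_{l}$ is first converted into a phase-tagged marker $\bar A_{l}^{(l,1)}$ carrying the rule label, then $D_{l}$ is converted into an unbarred intermediate marker $D_{l}^{(l,2)}$, then $\bar A_{l}^{(l,1)}$ is rewritten as $B_{l}$, and finally $D_{l}^{(l,2)}$ is rewritten as $\bar C_{l}$. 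Each of these local rewrites $X \mapsto Y$ is realised as a deletion of $X$ followed by an insertion of $Y$ at the freed cell; the crucial observation is that norm~$2$ permits a selector spanning three consecutive cells, so each insertion can simultaneously test both neighbours of the target cell --- pinning down the freed position uniquely --- and on the same window can also require the phase-specific context that identifies the script. The bar-moving rules of $G_{A}$ are simulated by a shorter local script of the same kind.

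The main obstacle is to rule out spurious interleavings between scripts of different rules, the role played by the membrane targets in Theorem~\ref{main}. The plan is to pair every legitimate norm-$2$ window with a companion family of trap rules that insert an indestructible non-terminal symbol $F$, propagated for ever by $I(\fbox{$F$}\,F)$, whenever an illegal norm-$2$ window appears --- a phase marker of one script found next to a neighbour not belonging to that script, two phase markers coexisting, and similar faults; once $F$ is present no halting terminal array can be produced. Cleanup of the scaffold $L, E, \bar E, R$ after a complete simulated derivation of $G_{A}$ is performed by norm-$2$ deletions of the form $D(\fbox{$\#$}\,L)$, $D(\fbox{$L$}\,E)$ and their right-end analogues, which peel off the frame from both ends and can succeed only once the interior already consists of terminal symbols. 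The concluding bookkeeping step is to verify that every halting computation of the constructed $1$-$D^{2}I^{2}A$ grammar $G$ corresponds to a terminating derivation of $G_{A}$ on the same interior, and conversely that every such derivation admits a halting computation, giving $L_{t}(G) = L_{\ast}(G_{A})$.
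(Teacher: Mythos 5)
Your overall strategy differs from the paper's: the paper does not internalise the membrane control of Theorem~\ref{main} into the array, but instead simulates a Turing machine with a bi-infinite tape, keeping a single compound nonterminal $\left[AqXD\right]$ that encodes the state together with both neighbours of the head, and finishing with an explicit left-then-right sweep ($q_f$ to the left end, then $q_f'$ to the right end) that converts every remaining $E$ into $E'$ before the $E'$s are deleted. Your route is not hopeless in principle, but as written it has two genuine gaps. First, the control problem is not actually solved by "trap rules for illegal windows." In $t$-mode a bad branch is only excluded if every configuration it reaches is non-halting or non-terminal; the mere availability of a trap rule does not force a derivation to apply it. Concretely: your scripts realise each rewrite as a deletion followed by an insertion, so mid-script the array contains an interior blank cell and, at that moment, no barred symbol and no phase marker. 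The workspace rules of Example~\ref{line} (e.g.\ $\fbox{$E$}E$), which you keep in the rule set, are then applicable at that blank whenever a neighbour is $E$, and filling the hole with $E$ silently aborts the script while leaving no nonterminal behind. Such a branch can go on to halt with a terminal array that corresponds to applying a spurious erasing rule $A\rightarrow\#$ of $G_A$, so $L_t(G)\subseteq L_*(G_A)$ fails. You would need to argue that every aborted or interleaved script necessarily leaves a nonterminal in the array (this is exactly what the paper's ever-present head symbol $\left[AqXD\right]$ guarantees for free).

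Second, the cleanup is inadequate. One-dimensional arrays may have non-interval shapes (the paper's example $a^2\#^3a\#a$), so a terminal array of $G_A$ is simulated with symbols $E$ standing for blanks \emph{in the interior}, not only at the two ends; peeling $D(\fbox{$\#$}L)$, $D(\fbox{$L$}E)$ and their right analogues from the borders can never remove those, nor does it remove the final barred symbol. Moreover the claim that the peeling "can succeed only once the interior already consists of terminal symbols" is unsupported: these rules inspect only a two-cell window at the border and will happily start dismantling the frame while nonterminals, or an unfinished script, still sit inside; combined with the first gap this can again yield spurious results (including the empty array). What is needed is a global check-and-erase pass, which is precisely the paper's final procedure: only after the halting state is reached does a marker travel to $L$, delete it, and sweep rightwards relabelling each $E$ as $E'$ (trapping if anything unexpected is met), after which the $E'$s may be deleted context-freely. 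Without such a sweep, or an equivalent mechanism, the direction $L_t(G)\subseteq L_*(G_A)$ and the completeness of the cleanup both remain unproven.
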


\begin{proof}
The main idea of the proof is to construct a one-dimensional array grammar
with array insertion and deletion rules simulating the actions of a Turing
machine $M_{A}$ with a bi-infinite tape which generates a given
one-dimensional array language (we identify each position of the tape with
the corresponding position in $\mathbb{Z}$; in that sense, $M_{A}$ can be
seen as a machine generating arrays). Let $M_{A}=\left( Q,V,T,\delta
,q_{0},q_{f}\right) $ where $Q$ is a finite set of states, $V$ is the tape
alphabet, $T\subseteq V$ is the input alphabet, $\delta $ is the transition
function, $q_{0}$ is the initial state, and $q_{f}$ is the final state. The
Turing machine starts on the empty tape, which means that on each position
there is the blank symbol represented by the special symbol $E\in V$.

We now construct the one-dimensional array grammar 
$G_{A}=\left( V^{\prime
},T,P,A\right) $ with
\begin{eqnarray*}
V^{\prime } & = & V\cup \left\{ L,R,L^{\prime },R^{\prime },E^{\prime
},F\right\} \\ 
& \cup & \left\{ \left[ AqXD\right] \mid A\in V\cup \left\{ L\right\} ,q\in
Q\cup \left\{ q_{f}^{\prime }\right\} ,X\in V,D\in V\cup \left\{ R\right\}
\right\} ,\\
A&=&LE\left[ Eq_{0}EE\right] ER
\end{eqnarray*}

\noindent and with the set of array insertion and deletion
rules $P$ constructed according to the following \textquotedblleft
program\textquotedblright :

The simulation of a computation of $M_{A}$ starts with the axiom $LE\left[
Eq_{0}EE\right] ER$; $L$ and $R$ are the left and the right endmarker,
respectively. Throughout the whole simulation, the position of the
(read/write-)head of the Turing machine $M_{A}$ is marked by the special
symbol $\left[ AqXD\right] $ indicating that the head currently is on a
symbol $X$ with an $A$ to its left and a $D$ to its right. Whenever new
\textquotedblleft workspace\textquotedblright\ is needed, $L$ or $R$ are
moved one position to the left or right, respectively, at the same time
inserting another $E$:

\begin{quotation}
$I\left( \fbox{$\left[ AqXE\right] $}\fbox{${R}$}R\right) $, $D\left( 
\fbox{$\left[ AqXE\right] $}R\fbox{${R}$}\right) $, $I\left( \fbox{$\left[
AqXE\right] $}E\fbox{${R}$}\right) $;
\end{quotation}

\begin{quotation}
$I\left( L\fbox{${L}$}\fbox{$\left[ EqXD\right] $}\right) $, $D\left( 
\fbox{${L}$}L\fbox{$\left[ EqXD\right] $}\right) $, $I\left( \fbox{${L}$}E%
\fbox{$\left[ EqXD\right] $}\right) $.
\end{quotation}

Any transition $\left( p,Y,R\right) \in \delta \left( q,X\right) $ (reading $%
X$ in state $q$, $M_{A}$ enters state $p$, rewrites $X$ by $Y$ and moves its
head one position to the right) is simulated by the rules

\begin{quotation}
$D\left( \fbox{${A}$}\fbox{$\left[ AqXD\right] $}D\right) $, 
$I\left( \fbox{$\left[ AqXD\right] $}\left[ YpDC\right] \fbox{${C}$}\right) $, 
\end{quotation}
\begin{quotation}
$D\left( \left[ AqXD\right] \fbox{$\left[ YpDC\right] $}\fbox{${C}$}\right) $, 
$I\left( Y\fbox{$\left[ YpDC\right] $}\fbox{${C}$}\right) $;
\end{quotation}

\noindent any transition $\left( p,Y,L\right) \in \delta \left( q,X\right) $ (reading 
$X$ in state $q$, $M_{A}$ enters state $p$, rewrites $X$ by $Y$ and moves 
its head one position to the left) is simulated by the rules

\begin{quotation}
$D\left( A\fbox{$\left[ AqXD\right] $}\fbox{${D}$}\right) $, 
$I\left( \fbox{${C}$}\left[ CpAY\right] \fbox{$\left[ AqXD\right] $}\right) $, 
\end{quotation}
\begin{quotation}
$D\left( \fbox{${C}$}\fbox{$\left[ CpAY\right] $}\left[ AqXD\right] \right) $, 
$I\left( \fbox{${C}$}\fbox{$\left[ CpAY\right] $}Y\right) $.
\end{quotation}

As soon as $M_{A}$ has reached the final state $q_{f}$ (without loss of
generality, we may assume that this state is the only one where $M_{A}$ may
halt), we start the final procedure in $G_{A}$ to obtain the terminal array:
First, we go to the left until $q_{f}$ reaches the left border $L$ of the
workspace, delete the left endmarker $L$ and go into the state $%
q_{f}^{\prime }$:

\begin{quotation}
$I\left( L^{\prime }\fbox{${L}$}\fbox{$\left[ Lq_{f}XD\right] $}\right) $, $%
D\left( \fbox{${L}^{\prime }$}L\fbox{$\left[ Lq_{f}XD\right] $}\right) $,
\end{quotation}
\begin{quotation}
$I\left( \fbox{${L}^{\prime }$}\left[ Eq_{f}^{\prime }EX\right] \fbox{$\left[
Lq_{f}XD\right] $}\right) $, $D\left( \fbox{${L}^{\prime }$}\fbox{$\left[
Eq_{f}^{\prime }EX\right] $}\left[ Lq_{f}XD\right] \right) $,
\end{quotation}
\begin{quotation}
$I\left( \fbox{${L}^{\prime }$}\fbox{$\left[ Eq_{f}^{\prime }EX\right] $}%
X\right) $, $D\left( L^{\prime }\fbox{$\left[ Eq_{f}^{\prime }EX\right] $}%
\fbox{$X$}\right) $, $I\left( E^{\prime }\fbox{$\left[ Eq_{f}^{\prime }EX%
\right] $}\fbox{$X$}\right) $.
\end{quotation}

With state $q_{f}^{\prime }$, $G_{A}$ now goes to the right, keeping each $%
a\in T$, but replacing $E$ by $E^{\prime }$ (these rules are the same as if
simulating the corresponding transitions in a Turing machine, hence, we do
not specify them here) until the right endmarker $R$ is reached:

\begin{quotation}
$I\left( \fbox{$\left[ Aq_{f}^{\prime }ER\right] $}\fbox{${R}$}R^{\prime
}\right) $, $D\left( \fbox{$\left[ Aq_{f}^{\prime }ER\right] $}R\fbox{${R}%
^{\prime }$}\right) $,
\end{quotation}

\begin{quotation}
 $I\left( \fbox{$\left[ Aq_{f}^{\prime }ER\right] $}%
E^{\prime }\fbox{${R}^{\prime }$}\right) $, $D\left( \left[ Aq_{f}^{\prime
}ER\right] \fbox{$E^{\prime }$}\fbox{${R}^{\prime }$}\right) $,
\end{quotation}

\begin{quotation}
$I\left( E^{\prime }\fbox{$E^{\prime }$}\fbox{${R}^{\prime }$}\right) $, 
$D\left( \fbox{$E^{\prime }$}\fbox{${E}^{\prime }$}R^{\prime }\right) $.
\end{quotation}

After the deletion of $R^{\prime }$, only the symbols $E^{\prime }$ remain
to be deleted by the array deletion rule $D\left( E\right) $.

Whenever something goes wrong in the process of simulating the transitions
of $M_{A}$ in $G_{A}$, the application of a trap rule will be enforced,
yielding an unbounded sequence of trap symbols $F$ to the right:

\begin{quotation}
$I\left( \fbox{$R$}F\right) $, $I\left( \fbox{$R^{\prime }$}F\right) $, $%
I\left( \fbox{$F$}F\right) $.
\end{quotation}

As can be seen from the description of the rules in $G_{A}$, we can simulate
all terminal computations in $M_{A}$ by suitable derivations in $G_{A}$, and
a terminal array $\mathcal{A}$ is obtained as the result of a halting
computation if and only if $\mathcal{A}\in L_{t}\left( G_{A}\right) $;
hence, we conclude $L_{t}\left( G_{A}\right) =L\left( M_{A}\right) $.
$\hfill {}$
\end{proof}

\section{Conclusion}

Array insertion grammars have already been considered as contextual array
grammars in \cite{Freundetal2007}, whereas the inverse interpretation of a
contextual array rule as a deletion rule has newly been introduced in \cite%
{Fernauetal2013}, which continued the research on P systems with left and
right insertion and deletion of strings, see \cite{Freundetal2012}.

In the main part of our paper, we have restricted ourselves to exhibit
examples of one-\-di\-men\-sio\-nal array languages that can be generated by
array insertion (contextual array) grammars as well as to show that array
insertion and deletion P systems using rules with norm at most one and even
array grammars only using array insertion and deletion rules with norm at
most two are computationally complete.

In \cite{Fernauetal2013}, the corresponding computational completeness
result has been shown for two-\-di\-men\-sio\-nal array insertion and
deletion P systems using rules with norm at most one. It remains as an
interesting question for future research whether the result for array
grammars only using array insertion and deletion rules with norm at most two
can also be achieved in higher dimensions, but at least for dimension two.

\subsubsection*{Acknowledgements}

We gratefully acknowledge the interesting discussions with Henning Fernau
and Markus L. Schmid on many topics considered in this paper, which is the
continuation of our joint research elaborated in \cite{Fernauetal2013}.

\bibliographystyle{eptcs}
\bibliography{MCU2013ArraysOneFinal}

\end{document}